\newcommand{\pkg}[1]{$\mathsf{ #1}$}
\newtheorem{lemma}{{Lemma}}
\newtheorem{theorem}{{Theorem}}
\newtheorem{definition}{{Definition}}
\newtheorem{remark}{{Remark}}
\newtheorem{corollary}{{Corollary}}
\newcommand{\bZ}{\boldsymbol{Z}}
\newcommand{\mZ}{\mathcal{Z}}
\newcommand{\bS}{\boldsymbol S}
\newcommand{\mE}{\mathcal{E}}
\newcommand{\mX}{\mathcal{X}}
\newcommand{\mW}{\mathcal{W}}
\newcommand{\mR}{\mathcal{R}}
\newcommand{\bY}{\boldsymbol{Y}}
\newcommand{\SZ}{S_{\mathcal{Z}_n}}
\newcommand{\bu}{\boldsymbol{u}}
\renewcommand{\bm}{\boldsymbol{m}}
\newcommand{\btheta}{\boldsymbol{\theta}}
\newcommand{\bTheta}{\boldsymbol{\Theta}}
\newcommand{\bpi}{\boldsymbol{\pi}}
\newcommand{\ba}{\boldsymbol{a}}
\newcommand{\be}{\boldsymbol{e}}
\newcommand{\bz}{\boldsymbol{z}}
\newcommand{\bX}{\boldsymbol{X}}
\newcommand{\bSigma}{\boldsymbol{\Sigma}}
\newcommand{\bepsilon}{\boldsymbol{\epsilon}}
\newcommand{\bx}{\boldsymbol{x}}
\newcommand{\pl}{\tilde l}
\newcommand{\bb}{\boldsymbol{b}}
\newcommand{\bv}{\boldsymbol{v}}
\newcommand{\E}{\mathrm{E}}
\renewcommand{\P}{\mathrm{P}}
\newcommand{\bdelta}{\boldsymbol{\delta}}
\newcommand{\bLambda}{\boldsymbol{\Lambda}}
\newcommand{\by}{\boldsymbol{y}}
\newcommand{\bbeta}{\boldsymbol{\beta}}
\newcommand{\arglmin}{\mathop{\mathrm{arglmin}}}
\newcommand{\Rmnum}[1]{\expandafter\@slowromancap\romannumeral #1@}
\def\boxit#1{\vbox{\hrule\hbox{\vrule\kern6pt\vbox{\kern6pt#1\kern6pt}\kern6pt\vrule}\hrule}}
\begin{document}
\title{Likelihood Adaptively Modified Penalties}

\author{Yang Feng$^{1}$, Tengfei Li$^{2}$ and Zhiliang Ying$^{1}$\\
\small{$^{1}$Department of Statistics, Columbia University, New York, NY, USA. }\\
\small{$^{2}$Chinese Academy of Sciences, Beijing, China.}}

\date{\today}
\maketitle

%
%
%

\begin{abstract}
A new family of penalty functions, adaptive to likelihood, is
introduced for model selection in general regression models. It
arises naturally through assuming certain types of prior
distribution on the regression parameters. To study stability
properties of the penalized maximum likelihood estimator, two types
of asymptotic stability are defined. Theoretical properties,
including the parameter estimation consistency, model selection
consistency, and asymptotic stability, are established under
suitable regularity conditions. An efficient coordinate-descent
algorithm is proposed. Simulation results and real data analysis
show that the proposed method has competitive performance in
comparison with existing ones.

\end{abstract}

%

\section{Introduction}

Classical work on variable selection dates back to \citet{aic}, who
proposed to choose a model that minimizes the Kullback- Leibler (KL)
divergence of the fitted model from the true model, leading to the
well-known Akaike Information Criterion (AIC). \cite{bic} took a
Bayesian approach by assuming prior distributions with nonzero
probabilities on
 lower dimensional subspaces. He proposed what is known as the BIC method for model selection. Other types of $L_0$
penalties include $C_p$ \citep{CP}, AICC \citep{Hurvich.Tsai.1989}, RIC \citep{ric} and EBIC
\citep{ebic}, among others.

The $L_0$ regularization has a natural interpretation in the form of
best subset selection. It also exhibits good sampling properties
\citep{Barronbirge1999}. However, in a high-dimensional setting, the
combinatorial problem has NP-complexity, which is computationally
prohibitive. As a result, numerous attempts have been made to modify
the $L_0$ type regularization to alleviate the computational burden.
They include bridge regression \citep{bridge},  non-negative garrote
\citep{garrote95}, LASSO \citep{Tibshirani96}, SCAD \citep{Fan01},
elastic net \citep{elasticnet},  adaptive LASSO or ALASSO
\citep{Huizou06}, Dantzig selector \citep{Candes.Tao.2007}, SICA
\citep{Lv.Fan.2009}, MCP \citep{MCP}, among others.

To a certain extent, existing penalties can be classified into one
of the following two categories: convex penalty and nonconvex
penalty. Convex penalties, such as LASSO \citep{Tibshirani96}, can
lead to a sparse solution and are stable as the induced optimization
problems are convex. Nonconvex penalties, such as SCAD \citep{Fan01}
and MCP \citep{MCP}, can on the other hand lead to sparser solutions
as well as  the so-called oracle properties (the estimator works as
if the identities of nonzero regression coefficients were known
beforehand). However, the non-convexity of the penalty could make
the entire optimization problem nonconvex, which in turn could lead
to a local minimizer and the solution may not be as stable as the
one if instead a convex penalty is used. Therefore, an important
issue for nonconvex penalties is a good balance between sparsity and
stability. For example, both SCAD and MCP  have an extra tuning
parameter which regulates the concavity of the penalty so that, when
it exceeds a threshold, the optimization problem becomes convex.

It is well known that penalty functions have Bayesian
interpretation. The classical $L_2$ penalty (ridge regression) is
equivalent to the Bayesian estimator with a normal prior. The
$L_1$-type penalties, such as LASSO, ALASSO etc., also have Bayesian
counterparts; cf. \cite{Park.Casella.2008},
\cite{Griffin.Brown.2010} and \cite{Hara.Sillanp.2009}.

\cite{heuristic stability} initiated the discussion about the
issue of stability  in model selection. He demonstrated that many
model selection methods are unstable but can be stabilized by
perturbing the data and averaging over many predictors.
\cite{Breiman.2001} introduced the random forest, providing a way
to stabilize the selection process. \cite{Buhlmann.Yu.2002}
derived theoretical results to analyze the variance reduction
effect of bagging in hard decision problems.
\cite{Meinshausen.Buhlmann.2010} proposed a stable selection,
which combines the subsampling with high-dimensional variable
selection methods.

The main objective of the paper is to introduce a family of
penalty functions for generalized linear models that can achieve a
proper balance between sparsity and stability. Because for the
generalized linear models, the loss function is often chosen to be
the negative log-likelihood, it is conceivable to take into
consideration the form of the likelihood in the construction of
penalty functions. The Bayesian connection to the penalty
construction and to the likelihood function make it natural to
introduce penalty functions through suitable prior distributions.
To this end, we introduce the family of negative absolute priors
(NAP) and use it to develop what to be called likelihood
adaptively modified penalties (LAMP). We define two types of
asymptotic stability that cover  a wide range of situations and
provide a mathematical framework under which the issue of
stability can be studied rigorously. We show that under suitable
regularity conditions, the LAMP results in an asymptotically
stable estimator.


%


%

The rest of the paper is organized as follows. Section 2
introduces the LAMP family with motivations from its likelihood
and Bayesian connections. Specific examples are given for the
commonly encountered generalized linear models. In Section 3, we
introduce two types of asymptotic stability and study the
asymptotic properties of LAMP family. The choice of the tuning
parameters and an efficient algorithm are discussed in Section 4.
In Section 5, we present simulation results and applied the
proposed method to two real data sets. We conclude with a short
discussion in Section 6. All the technical proofs are relegated to
the Appendix.

\section{Likelihood adaptively modified penalty}\label{sec:Lamp:Lamp}

To introduce our approach, we will first focus on the generalized
linear models \citep{Nelder.Wedderburn.1972}. It will be clear
that the approach also works for other types of regression models,
including various nonlinear regression models. Indeed, our
simulation results presented in Section 5 also include the probit
model, which does not fall into the exponential family induced
generalized linear models.

Throughout the paper, we shall use $(\bX,Y)$ and $(\bX_i,Y_i)$ to
denote the iid random variables for $i=1,2,\cdots,n$, where $Y_i$ is
the response observation of $Y$ and $\bX_i$ is the $p+1$-dimensional
covariate observation of $\bX$, $\bX_i=(X_{i0},X_{i1},$
$\cdots,X_{ip})^T$ with $X_{i0}\equiv1$. Let
$\mX=(\bX_1,\bX_2,\cdots,\bX_n)^T$, the $n\times(p+1)$ matrix of
observed covariates.  Following \cite{Nelder.Wedderburn.1972}, we
assume that the conditional density of $Y_i$  given covariates
$\bX_i$ has the following form
\begin{align}\label{eq:glm:def}
f(Y_i,\btheta|\bX_i) = h(Y_i)
\exp\left[\frac{Y_i\xi_i-g(\xi_i)}{\varphi}\right],
\end{align}
where $g$ is a smooth convex function,
$\boldsymbol\theta=(\alpha,\boldsymbol\beta^T)^T=(\theta_0,\theta_1,\cdots,\theta_p)^T$
or $(\theta_0,\beta_1,\cdots,\beta_p)^T$,
$\xi_i=\bX_i^T\boldsymbol\theta$ and $\varphi$ is the dispersion
parameter. Then up to an affine transformation, the log-likelihood
function is given by
\begin{align}
l(\btheta) = \sum_{i=1}^n [Y_i\xi_i-g(\xi_i)].
\end{align}
Note that the form of $l$ is uniquely determined by $g$ and vice
versa.

For a given $g$, we propose the induced penalty function
\begin{equation}\label{lamp}
p_{\lambda}(\beta)=\frac{\lambda^2}{g'(\alpha_1)\lambda_0}
[g(\alpha_1)-g(\alpha_1-\frac{\lambda_0}{\lambda}\beta)],
\end{equation}
which contains three parameters $\alpha_1\leq 0$, $\lambda_0>0$ and
$\lambda>0$. The corresponding negative penalized log-likelihood
function is
\begin{align}\label{eq::lik-pen}
-\pl(\btheta)=-l(\btheta)+n\sum_{j=1}^pp_{\lambda}(|\beta_j|),
\end{align}
Because the penalty function defined by (\ref{lamp}) is likelihood
specific, we will call such a penalty likelihood adaptively modified
penalty (LAMP).

We clearly have $p_{\lambda}(0)=0$ and
$p'_{\lambda}(0)=\lambda$. Furthermore, taking the first and
second derivatives, we get
$$p'_{\lambda}(\beta)=\lambda\frac{g'(\alpha_1-\frac{\lambda_0}{\lambda}
\beta
)}{g'(\alpha_1)} \mbox{ and }
p''_{\lambda}(\beta)=-\lambda_0\frac{g''(\alpha_1-\frac{\lambda_0}{\lambda}\beta
)}{g'(\alpha_1)}.$$

\begin{remark}
The parameters have clear interpretations: $\lambda$ is the usual
tuning parameter to control the overall penalty level; $\alpha_1$
is a location parameter, which may be fixed as a constant;
$\lambda_0$ controls the concavity of the penalty in the same way
as $a$ in SCAD (Fan and Li, 2001) and $\gamma$  in MCP (Zhang,
2010).
\end{remark}

\begin{remark}
The exponential family assumption given by \eqref{eq:glm:def} implies that
$g'(\xi)=E(Y|\bX)$ and $g''(\xi)>0$. When $Y$ is nonnegative,
$g'(\xi)\ge 0$. Thus $p_\lambda''$ is negative and the
corresponding LAMP must be concave.
\end{remark}

%

Like many other penalty functions, the family of LAMPs also has a
Bayesian interpretation. To see this, we introduce the following
prior for any given $g$ that defines the exponential family
\eqref{eq:glm:def}.

\begin{definition}
Let $a_j\leq 0$, $b_j<0$ and $c_j,$ $j=1,\cdots,p$ be constants.
Define a prior density, to be called negative absolute prior
(NAP),
  \begin{align}
   p(\boldsymbol\beta)\propto \prod_{j=1}^p\exp[-c_j\{g(a_j)-g(a_j+b_j|\beta_j|)\}].
  \end{align}
\end{definition}

If we choose $a_j=\alpha_1$, $b_j=-\lambda_0/\lambda$ and
$c_j=-\lambda^2/[g'(\alpha_1)\lambda_0]$, then the posterior mode
is exactly the minimizer of (\ref{eq::lik-pen}) the penalty form
of LAMP. We will show that such a choice will lead to good
asymptotic properties.

\begin{remark}
The additive form for the penalty function suggests that the prior
must be of the product form. For each $i$, hyperparameter $b_i$
scales $\beta_i$ while hyperparameter $c_i$ gives the speed of
decay. For the $i$th parameter $\beta_i$, the larger the values of
$b_i$ and $c_i$ are, the more information the prior has for
$\beta_i$. Translating it into the penalty function, it means that
values of $b_i$ and $c_i$ represent the level of penalty and they
can be adjusted separately for each component.
\end{remark}

\begin{remark}
The form of NAP is to similar to that of a conjugate prior, in
that the shape of the prior is adapted to that of the likelihood
function. However, NAP also differs from the conjugate prior in
several aspects: negative when $g'(\cdot)>0$, absolute, from
separate dimension, and with a LASSO term taken away. Unlike
conjugate prior, which assumes additional samples, this can be
seen as to take away the absolute value of redundant sample
information from each dimension.
\end{remark}

%
%
It is worth looking into the commonly encountered generalized linear
models and examine properties of the corresponding LAMPs.\\

{\it Linear regression}. In this case, $g(\xi)=\xi^2/2$. Thus, LAMP
reduces to the elastic net \citep{elasticnet}.
$$p_{\lambda}(\beta)= \lambda \beta-\alpha_1^{-1}\frac{\lambda_0}{2}\beta^2.$$

{\it Logistic regression}. Here  $g(\xi)=\log(1+e^{\xi})$.
Consequently, the penalty function
  \begin{align}\label{eq::sigmoid}
p_{\lambda}(\beta)=\frac{\lambda^2(1+\rho)}{\lambda_0\rho}
\log\left[\frac{1+\rho}{1+\rho
\exp(-\lambda_0/\lambda\beta)}\right],
  \end{align}
where $\rho=\exp(\alpha_1)>0$. This penalty will be called sigmoid penalty.

{\it Poisson regression}. Here $g(\xi) =e^{\xi}$ and we have
$$
p_{\lambda}(\beta)=\frac{\lambda^2}{\lambda_0}[1-\exp(-\frac{\lambda_0}{\lambda}\beta)].
$$
This will be called the Poisson penalty.

{\it Gamma regression}. For the gamma regression,
we have $g(\xi)=-\log(-\xi)$. Then, the penalty has the following
form
$$p_{\lambda}(\beta)=\frac{\lambda^2\alpha_1}{\lambda_0}[\log(-\alpha_1)-\log(\frac{\lambda_0}{\lambda}\beta-\alpha_1)].$$

{\it Inverse gaussian regression}. For the inverse Gaussian, we have $g(\xi)=-\sqrt{-2\xi}$. The resulting
penalty
$$p_{\lambda}(\beta)=\frac{2\lambda^2}{\lambda_0}(-\alpha_1)^{1/2}[(\frac{\lambda_0}{\lambda}\beta-\alpha_1)^{\frac{1}{2}}-(-\alpha_1)^{1/2}].$$

{\it Probit regression}. As we mentioned earlier, LAMP approach can
also accommodate regression models not in the form of naturally parametrized
exponential families. In
particular, it is applicable to the probit regression for binary
outcomes. In this case, $g(\xi)=-\log(\Phi(-\xi))$, which leads to
the following penalty form
$$p_{\lambda}(\beta)=\frac{\lambda^2}{\lambda_0}\frac{\phi(-\alpha_1)}{\Phi(-\alpha_1)}{\log[\Phi(-\alpha_1+\frac{\lambda_0\beta}{\lambda})-\Phi(-\alpha_1)]},$$
where $\Phi(\cdot)$ and $\phi(\cdot)$ are respectively distribution and density functions of the standard normal random variable.

\begin{remark}\label{remark::alpha_1_choice}
For the above examples, the effect of tuning parameters in the
penalty function varies across settings. For example, $\alpha_1$
does not play a role in the Poisson regression. In addition, we
have a natural choice for $\alpha_1$ for the penalty functions.
Specifically, we can choose $\alpha_1=-1$ for the cases of linear,
gamma, and inverse gaussian, and $\alpha_1=0$ for the cases of
logistic and probit.
\end{remark}

\begin{figure}
\caption{Penalties and the derivative of the penalties.\label{fig:penaltyfunctions1}}
\includegraphics[scale=0.35]{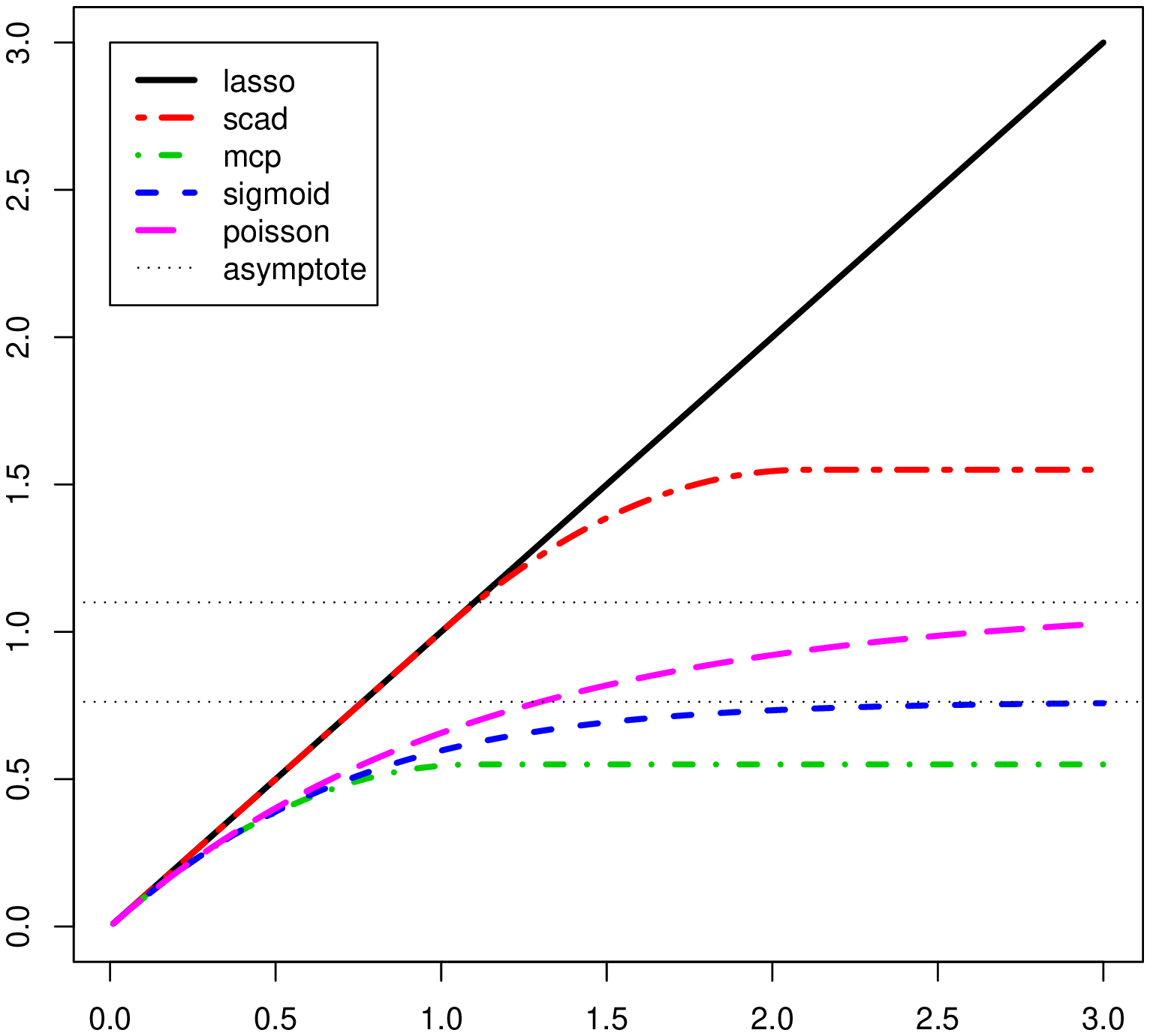}
\includegraphics[scale=0.35]{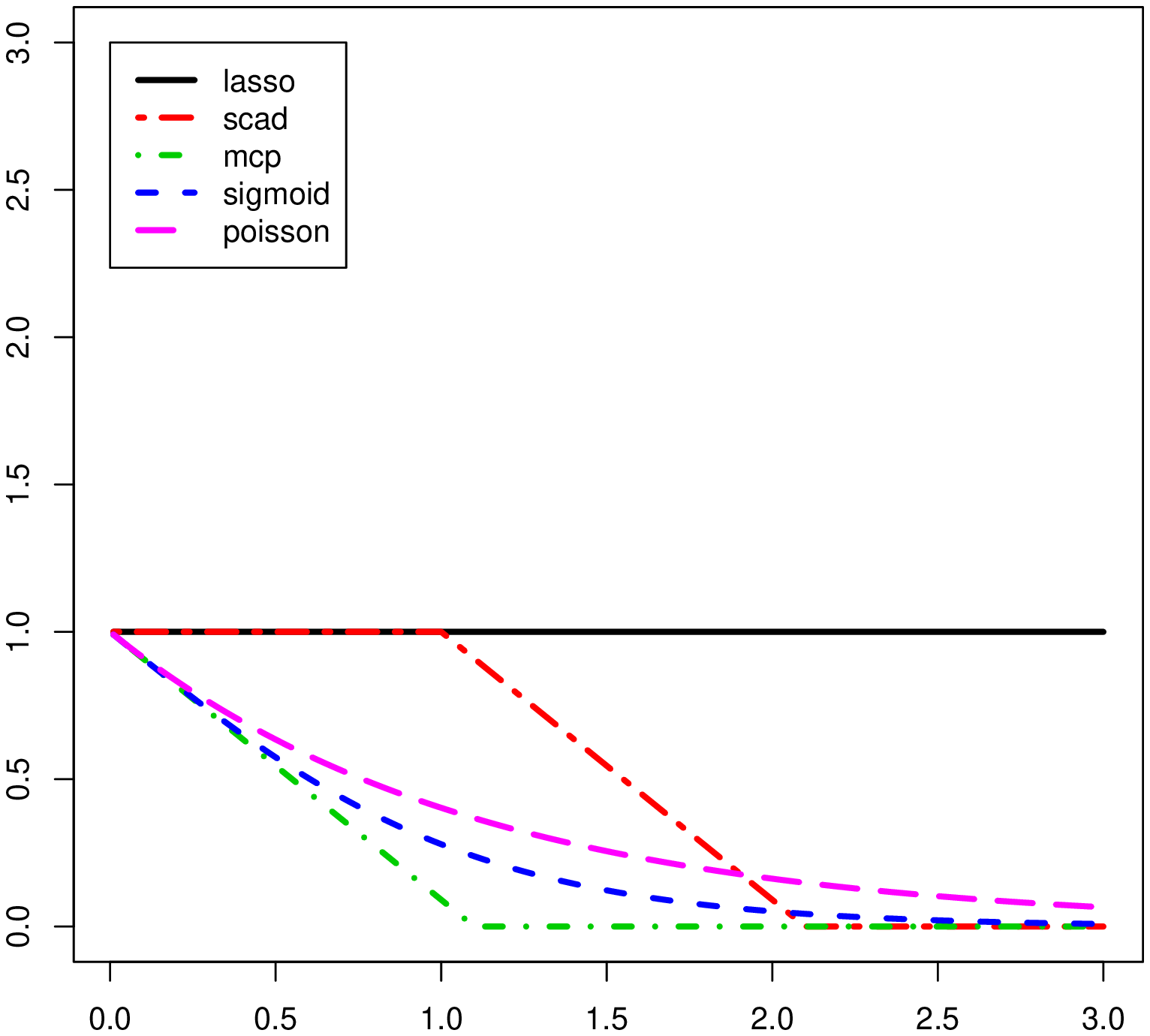}
\end{figure}

As the examples show, the LAMP family is fairly rich. They also
differ from the commonly used penalties. Figure
\ref{fig:penaltyfunctions1} contains plots of penalty functions
(left panel) along with their derivatives (right panel) that include
LASSO, SCAD, MCP and two members of the LAMP family (sigmoid and
Poisson). Here $\gamma=1.1$ for MCP, $a=2.1$ for SCAD,
$\lambda_0=2/1.1\approx1.82$ for sigmoid and $1/1.1\approx0.91$ for
Poisson penalty, and $\rho=1$ for sigmoid penalty. The parameters
are chosen to keep the maximum concavity of these penalties the
same.
 Figure \ref{fig:penaltyfunctions1} shows that sigmoid and Poisson penalties lie between MCP and SCAD when the maximum concavity is the same. Also, from the graphs of the derivatives, it is easy to identify that the penalties of the LAMP family have continuous derivatives (actually they have continuous derivatives of any order for most common generalized linear models) as compared with the discontinuous ones for SCAD and MCP. It will be shown that this feature can make the optimization problem easier and the estimation more stable.
 \begin{remark}
Consider a one-dimensional penalized logistic regression problem,
$\min_{\theta}[-l(\theta)+p_{\lambda}(|\theta|)]$, where
derivative of the penalty function $p'_{\lambda}(\cdot)>0.$ If the true
parameter $\theta_0\neq0$, we need
$p'_{\lambda}(|\theta_0|)\le\sup_{\theta\in U(\theta_0,\epsilon)}l'(\theta)$, where $U(\theta_0;\epsilon)$ is a $\theta_0$-centered ball with radius  $\epsilon>0$ uniform in $n$,   to get  $\hat\theta\neq0$, while  for $\theta_0=0$, we need
$p'_{\lambda}(0)\ge l'(0)$. Thus, to differentiate a nonzero $\theta_0$
from 0, we should make the difference between
$p'_{\lambda}(|\theta_0|)$ for $\theta_0\neq 0$ and $p'_{\lambda}(0)$ as large as
possible. In addition, we need to control the second derivative of the
penalty function to make the penalized negative log-likelihood function
globally convex. Figure \ref{fig:concavity} illustrates graphically how such dual objectives can be met for the logistic regression under the sigmoid penalty and MCP. The grey area represents the difference between $p'_{\lambda}(|A|)$ and
$p'_{\lambda}(0)$ for the sigmoid penalty. A subset of the grey area, marked by darker color, represent the corresponding difference under MCP. The reason that MCP tends to have a smaller difference compared with sigmoid penalty is because the second derivative of the penalty needs to be controlled by that of the negative log-likelihood in order to maintain the global convexity.

\end{remark}
\begin{figure}[!t]
\caption{Negative second derivative of log-likelihood, the sigmoid penalty and MCP in logistic
regression.\label{fig:concavity}}
\begin{center}

\includegraphics[scale=0.6]{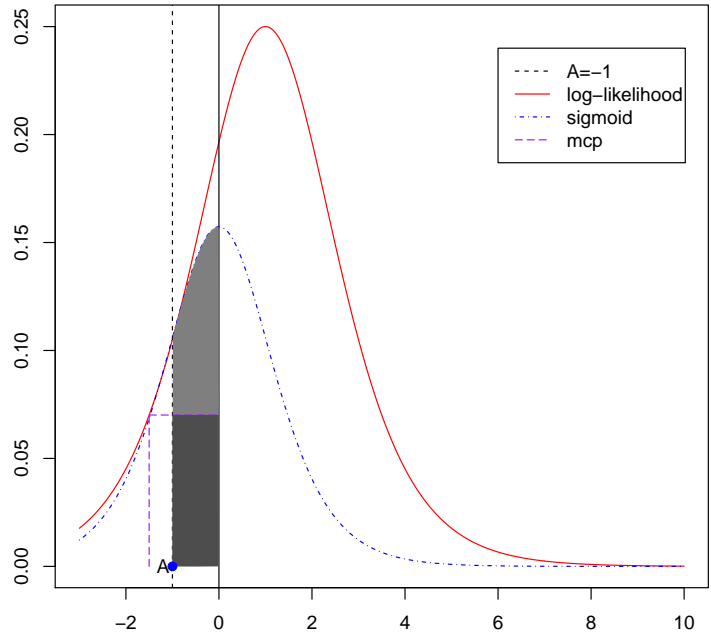}

\end{center}

\end{figure}

\section{Theoretical properties}
\subsection{Asymptotic stability}



Recall that the negative log-likelihood considered here is convex,
and the maximum likelihood estimation is uniquely defined and
stable. By adding a nonconvex penalty, the convexity may be
destroyed. To study stability of the penalized MLE, it is necessary
to study the impact of the penalty on the concavity, especially when
$n$ is large.

For the negative penalized maximum log-likelihood estimation
procedure \eqref{eq::lik-pen}, if nonconvex penalties are used, the
solution to the minimization problem may not be unique. Therefore,
it is natural to study the behavior of the local maximizers in
penalized likelihood estimates when the observations are perturbed.
Here we introduce a new concept of asymptotic stability to describe
the asymptotic performances of local minimizers in penalized
likelihood estimates. Note that even for the negative penalized
 likelihood estimators with convex penalty where the unique
minimizer exists, such asymptotic stability concept is still useful
in characterizing the behavior of the global maximizer.


Suppose we want to minimize with respect to $\btheta$ a criterion
function $M_n(\mZ_n,\btheta)$, where $\mZ_n=(\bZ_1,\cdots,\bZ_n)^T$
and $\bZ_i=(\bX_i^T, Y_i)^T$ is the $i$th observation of
$\bZ=(\bX,Y)$. Denote by $\bS_Z,S_{\mathcal{Z}_n}$ and $\bTheta$ the
support for $\boldsymbol Z, \mZ_n$ and domain for $\btheta$,
respectively. We say that $\btheta^*$ is a local minimizer if there
exists a neighborhood within which $M_n(\mZ_n,\cdot)$ attains its
minimum. More precisely, the set of the
local minimizers is defined as
\begin{align*} \arglmin\limits_{\bTheta}
M_n(\mZ_n,\btheta)\triangleq&
\{\btheta^*\in\bTheta|\exists\epsilon>0, M_n(\mZ_n,\btheta^*)=\min_{
\|\btheta-\btheta^*\|\leq\epsilon} M_n(\mZ_n,\btheta)\}.
\end{align*}
Throughout the paper, $\|\cdot\|$ denotes the $L_2$-norm of a vector
or matrix (vectorized) while $|\cdot|$ denotes the $L_1$ norm. For
any $ \btheta_*\in\bTheta,r>0$ let
$U(\btheta_*;r)\triangleq\{\btheta\in\bTheta|\|\btheta-A\|<r\}$ the be
$\btheta_*$-centered ball with radius $r$.

 It is clear from the definition that the set of local maximizers
is random. We characterize its asymptotic behavior in terms of
whether or not the set converges to a single point as $n\rightarrow
\infty$. For a set $A$, define its diameter as
$\mathrm{diam}(A)\triangleq\sup\limits_{\bx,\by\in A}\|\bx-\by\|$.

\begin{definition}[Weak asymptotic stability]
We say that the set of local minimizers of $M_n(\bZ_n,\cdot)$
satisfies weak asymptotic stability if $\forall \delta>0,$
\begin{equation}\label{stab}
    \lim_{n\rightarrow\infty}\mathrm{P}\Big(\varlimsup_{\epsilon\rightarrow0}
\mathrm{diam}\Big[\bigcup_{\substack{\|\mE_n\|/\sqrt{n}<\epsilon\\
\mathcal{Z}_n+\mathcal{E}_n\in
\SZ}}\{\arglmin_{\btheta}M_n(\mZ_n+\mE_n,\btheta)\}\Big]>\delta\Big)=0.
\end{equation}

\end{definition}

The weak asymptotic stability characterizes the asymptotic behavior
of local minimizers when the data are perturbed slightly. It shows
that for large $n$ and small perturbation, the local minimizers stay
sufficiently close to each other with high probability. Defined
below is a stronger version, which guarantees uniqueness of the
minimizer.

\begin{definition}[Strong asymptotic stability]
We say that the set of local minimizers of $M_n(\bZ_n,\cdot)$
satisfies strong asymptotic stability if
\begin{equation}\label{stab2}
    \lim_{n\rightarrow\infty}\mathrm{P}\Big(\varlimsup_{\epsilon\rightarrow0}
\mathrm{diam}\Big[\bigcup_{\substack{\|\mE_n\|/\sqrt{n}<\epsilon\\
\mathcal{Z}_n+\mathcal{E}_n\in
\SZ}}\{\arglmin_{\btheta}M_n(\mZ_n+\mE_n,\btheta)\}\Big]=0\Big)=1.
\end{equation}
\end{definition}

\begin{remark}
Under the weak asymptotic stability, multiple minimizers, though
shrinking to $0$, may exist with high probability for any finite
$n$. The strong asymptotic stability, on the other hand, entails
that for sufficiently large $n$, the probability of having multiple
minimizers must converge to $0$, implying that there must be a
unique minimizer with high probability.
\end{remark}
\begin{remark}
$L_0$ penalties, though may have the weak asymptotic property if we
adjust its tuning parameter like AIC, will never possess the strong
asymptotic property, because then the solution of each submodel with
the number of parameters constrained will be a local optimizer, and
with no probability all solutions will coincide, as
$n\rightarrow\infty$.
\end{remark}
We now consider the situation in which $M_n(\mZ_n,\btheta)$ can be
approximated by an i.i.d. sum with a remainder term, i.e.,
\begin{align}\label{eq:stability}M_n(\mZ_n,\btheta)=\frac{1}{n}\sum_{i=1}^n
m(\bZ_i,\btheta)+r_n(\btheta).
\end{align}
This general form includes the form of a negative log-likelihood
plus a penalty. Let $m_n(\mZ_n,\btheta)\triangleq
n^{-1}\sum_{i=1}^nm(\bZ_i,\btheta)$. We assume throughout the rest
of the paper that $\bTheta$ is compact with the true parameter
value, denoted by $\btheta_0$, lying in its interior and that
$\bar m(\btheta)=\E m(\bZ,\btheta)$ is finite. We need the
following regularity conditions.
\begin{enumerate}
  \item [(C1)] For any
$\bx\in \bS_Z, m(\bx,\btheta)$ is convex as a function $\btheta$,
and $\bar m(\btheta)$ is
 strictly convex.
\item [(C2)]%
There exists function $K: \bS_Z\mapsto \mathds{R},$ such that $\E
K^2(\bZ)<\infty$ and, for any $\bdelta_1,$ $\bdelta_2$ and $\bz$,
such that $\bz+\bdelta_i\in \bS_Z, i=1,2,$
$$\sup_{\btheta\in\bTheta}|m(\bz+\bdelta_1,\btheta)-m(\bz+\bdelta_2,\btheta)
|\leq K(\bz)\|\bdelta_1-\bdelta_2\|.$$
\item [(C3)] The remainder term is asymptotically flat:
$$\lim_{n\rightarrow \infty}\sup_{\btheta_1,\btheta_2\in\bTheta}\frac{|r_n(\btheta_1)-r_n(\btheta_2)|}{\|\btheta_1-\btheta_2\|}
=0.$$
\item [(C4)]
There exists a local minimizer of \eqref{eq:stability} that is a
consistent estimator of $\btheta_0.$
\item [(C5)]
There exists $\delta_0>0,$ such that
$$ \lim_{n\to\infty} P\left(M_n(\mZ_n,\btheta) \mbox{~is strictly convex within~} U(\btheta_0;\delta_0)\cap\bTheta\right)=1.$$
\end{enumerate}

\begin{remark}
When $m$ and $r_n$ are both smooth functions, Conditions (C2) and
(C3) can be guaranteed by assuming that the derivative of $m$ is
bounded uniformly and that the derivative of $r_n$ tends to 0
uniformly. Condition (C5) is guaranteed by the convexity around the
true parameters, uniformly in $n$.
\end{remark}

The next two lemmas provide sufficient conditions for  the two types of  asymptotic stability.

\begin{lemma}\label{lemma::stability:weak}
If  Conditions (C1)-(C3)  are satisfied, then we have weak asymptotic
  stability.
\end{lemma}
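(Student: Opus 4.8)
The plan is to exploit the fact that, under (C1), every perturbed criterion is again a convex empirical average plus the \emph{same} asymptotically flat remainder $r_n$, and to show that the local minimizers of such a ``convex-plus-flat'' function cannot escape a small sublevel set of the strictly convex limit $\bar m$. I would separate the spread of the minimizer set into two contributions: one caused by the perturbation $\mE_n$, controlled by (C2) and shown to vanish as $\epsilon\to0$; and one caused by the finite-sample departure of $M_n$ from strict convexity, shown to vanish as $n\to\infty$ via (C1) and (C3).

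First I would record the perturbation bound. Writing $\mZ_n+\mE_n=(\bZ_1+\mE_{n,1},\dots,\bZ_n+\mE_{n,n})^T$, the map $\tilde m_n(\btheta):=n^{-1}\sum_i m(\bZ_i+\mE_{n,i},\btheta)$ is convex in $\btheta$ by (C1). Applying (C2) with $\bdelta_1=\mE_{n,i}$, $\bdelta_2=0$ and Cauchy--Schwarz, and using $\|\mE_n\|^2=\sum_i\|\mE_{n,i}\|^2$, gives
\[
\sup_{\btheta\in\bTheta}\bigl|\tilde m_n(\btheta)-m_n(\mZ_n,\btheta)\bigr|\le \frac1n\sum_{i=1}^n K(\bZ_i)\|\mE_{n,i}\|\le \kappa_n\,\frac{\|\mE_n\|}{\sqrt n}<\kappa_n\,\epsilon,
\]
where $\kappa_n:=(n^{-1}\sum_i K^2(\bZ_i))^{1/2}\to(\E K^2(\bZ))^{1/2}<\infty$ almost surely by the SLLN. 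The essential feature is that this bound is uniform over all admissible perturbations and linear in $\epsilon$.

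Second, and this is the crux, I would establish a deterministic localization statement: if $\phi$ is convex on $\bTheta$ and $r$ is $L$-Lipschitz, then every local minimizer $\btheta^*$ of $\phi+r$ satisfies $\phi(\btheta^*)\le\phi(\bar\btheta)+L\,\mathrm{diam}(\bTheta)$, where $\bar\btheta:=\arg\min_{\bTheta}\bar m$ is unique by strict convexity. Moving along the segment $\btheta_t=\btheta^*+t(\bar\btheta-\btheta^*)$, the local-minimum inequality gives $\phi(\btheta_t)-\phi(\btheta^*)\ge -Lt\|\bar\btheta-\btheta^*\|$ for small $t>0$, while convexity gives $\phi(\btheta_t)-\phi(\btheta^*)\le t(\phi(\bar\btheta)-\phi(\btheta^*))$; dividing by $t$ yields the claim (this uses convexity of $\bTheta$ so the segment stays feasible). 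Applying it with $\phi=\tilde m_n$, $r=r_n$ and $L=L_n:=\sup_{\btheta_1\ne\btheta_2}|r_n(\btheta_1)-r_n(\btheta_2)|/\|\btheta_1-\btheta_2\|$ (so $L_n\to0$ by (C3)), then combining with the perturbation bound and with $\gamma_n:=\sup_{\btheta}|m_n(\mZ_n,\btheta)-\bar m(\btheta)|\to0$ --- which follows from the pointwise SLLN upgraded by the convexity lemma under (C1) --- I would conclude that every local minimizer $\btheta^*$ of $M_n(\mZ_n+\mE_n,\cdot)$ with $\|\mE_n\|/\sqrt n<\epsilon$ lies in the sublevel set
\[
S_{n,\epsilon}:=\bigl\{\btheta\in\bTheta:\bar m(\btheta)\le\min_{\bTheta}\bar m+2\kappa_n\epsilon+2\gamma_n+L_n\,\mathrm{diam}(\bTheta)\bigr\}.
\]

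Finally I would pass to the diameter. Since the union over admissible $\mE_n$ of the local-minimizer sets is contained in $S_{n,\epsilon}$, taking $\varlimsup_{\epsilon\to0}$ removes the $2\kappa_n\epsilon$ term and bounds the quantity inside \eqref{stab} by $\omega(2\gamma_n+L_n\,\mathrm{diam}(\bTheta))$, where $\omega(\eta):=\mathrm{diam}\{\bar m\le\min\bar m+\eta\}$. Strict convexity of $\bar m$ on the compact $\bTheta$ forces the nested sublevel sets to shrink to the single point $\bar\btheta$, so $\omega$ is nondecreasing with $\omega(0^+)=0$; since $2\gamma_n+L_n\,\mathrm{diam}(\bTheta)\to0$ in probability, for any $\delta>0$ we get $\P(\omega(2\gamma_n+L_n\,\mathrm{diam}(\bTheta))>\delta)\to0$, which is exactly \eqref{stab}. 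I expect the main obstacle to be the localization step together with the uniform law of large numbers: proving $\gamma_n\to0$ rigorously relies on the convexity lemma (pointwise convergence of convex functions promotes to uniform convergence on compacta), and the segment argument requires the feasible set to be convex, a point that should be stated explicitly or circumvented by localizing near $\bar\btheta$.
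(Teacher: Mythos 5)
Your proof is correct, and it rests on the same engine as the paper's: the observation that a local minimizer of a convex function plus an asymptotically flat remainder is nearly a global minimizer along line segments (your ``localization'' step is exactly the paper's inequality $f_i(\bu'_{2,n})\geq f_i(\bu_{i,n})+o_p(1,n)$), combined with the (C2) perturbation bound and strict convexity of $\bar m$. Where you genuinely diverge is in how the conclusion is extracted. The paper compares two arbitrary local minimizers $\bu_{1,n},\bu_{2,n}$ pairwise, shows the convexity defect $\Delta(\bu_{1,n},\bu_{2,n})$ of $\bar m$ along the segment between them is $o_p(1)$, and converts that into a distance bound via the quantity $C_\delta$; it never identifies a limit point. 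You instead anchor everything to the unique minimizer $\bar\btheta$ of $\bar m$ and show that every local minimizer of every admissible perturbed criterion lies in a sublevel set $\{\bar m\le\min\bar m+\eta_n\}$ with $\eta_n\to0$, so that the whole union collapses to $\{\bar\btheta\}$. Your route proves something strictly stronger (uniform consistency of all local minimizers toward $\arg\min_{\bTheta}\bar m$, not merely vanishing diameter) and makes the needed uniform law of large numbers $\gamma_n\to0$ explicit via the convexity lemma; the paper's route appears to need only the LLN at two points, but since those points are data-dependent it implicitly requires the same uniformity, so nothing is saved there. The two caveats you flag --- convexity of $\bTheta$ so the segment stays feasible, and lower semicontinuity of $\bar m$ so the nested sublevel sets shrink to a point --- are equally present (and equally unstated) in the paper's argument, which also moves along segments between minimizers and relies on compactness plus strict convexity to get $C_\delta>0$ and $C_\delta\to0$. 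So there is no gap; your version is, if anything, the cleaner and more informative of the two.
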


It is straightforward to verify that for generalized linear models, SCAD, MCP, sigmoid penalty and Poisson penalty all
satisfy Conditions (C1)-(C3), leading to the weak asymptotic stability.

\begin{lemma}\label{lemma:stability:strong}
If Conditions (C1)-(C5) are satisfied, then strong asymptotic
stability holds.
\end{lemma}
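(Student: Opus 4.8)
The plan is to upgrade the conclusion of Lemma~\ref{lemma::stability:weak} (weak asymptotic stability, already available from (C1)--(C3)) to strong asymptotic stability by using the two additional hypotheses (C4) and (C5). The key observation is that weak stability already forces the entire union of perturbed local-minimizer sets to have diameter shrinking below any fixed $\delta$ with probability tending to $1$; what remains is to rule out, with probability tending to $1$ (not merely asymptotically), the coexistence of two distinct local minimizers within the small region where they are now trapped. Strong stability additionally requires the $\limsup_{\epsilon\to0}$ of the diameter to equal $0$ almost surely in the limiting probability, so the argument must pin the minimizer down to an exactly unique point inside a fixed neighborhood of $\btheta_0$.

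First I would invoke (C4) to locate a consistent local minimizer $\hat\btheta_n \to \btheta_0$ in probability, so that for large $n$ this minimizer lies inside the fixed ball $U(\btheta_0;\delta_0)$ furnished by (C5) with high probability. Next I would apply (C5): on the event $A_n$ that $M_n(\mZ_n,\cdot)$ is strictly convex on $U(\btheta_0;\delta_0)\cap\bTheta$, any local minimizer lying in the interior of that neighborhood is automatically the unique global minimizer of $M_n$ restricted to that neighborhood, since a strictly convex function admits at most one local (hence global) minimizer on a convex set. The combination of (C4) and (C5) therefore yields an event of probability tending to $1$ on which there is exactly one local minimizer inside $U(\btheta_0;\delta_0)$.

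The crux is to combine this local uniqueness with the global confinement coming from weak stability. I would argue as follows: fix $\delta < \delta_0/2$. By Lemma~\ref{lemma::stability:weak} applied with this $\delta$, the whole perturbed union of arglmin sets has diameter at most $\delta$ with probability tending to $1$; since (C4) places a consistent minimizer near $\btheta_0$, this union is contained in $U(\btheta_0;\delta_0)$ with probability tending to $1$. On the intersection with the strict-convexity event $A_n$, every member of every perturbed arglmin set must coincide with the single minimizer of the (perturbed) strictly convex function on that neighborhood, so the union collapses to a single point and its diameter is exactly $0$. Here I must be careful that strict convexity of $M_n(\mZ_n+\mE_n,\cdot)$ persists under the small perturbations $\|\mE_n\|/\sqrt n < \epsilon$; this follows because the Lipschitz control in (C2) together with (C3) makes the perturbed criterion uniformly close to the unperturbed one, so strict convexity on the fixed neighborhood is preserved for $\epsilon$ small, which is exactly the regime relevant to the $\varlimsup_{\epsilon\to0}$.

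The main obstacle I anticipate is the interchange of the perturbation limit $\varlimsup_{\epsilon\to0}$ with the strict-convexity event and the probabilistic statement, namely showing that the \emph{single} strict-convexity event $A_n$ (stated for the unperturbed data) transfers uniformly to all admissible perturbations simultaneously as $\epsilon\to0$. I would handle this by noting that strict convexity on a compact neighborhood is an open condition controlled by a positive lower bound on the smallest eigenvalue of the (generalized) Hessian, and that the uniform Lipschitz bound from (C2) makes this lower bound robust to perturbations of size $O(\epsilon)$; hence for each fixed realization in $A_n$ there exists $\epsilon_0>0$ such that all perturbations with $\|\mE_n\|/\sqrt n<\epsilon_0$ keep the perturbed criterion strictly convex on $U(\btheta_0;\delta_0)$, forcing the inner $\varlimsup$ over $\epsilon$ to vanish. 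Assembling these pieces, the probability that the $\varlimsup_{\epsilon\to0}$ of the diameter equals $0$ is bounded below by $\P(A_n\cap\{\text{weak-stability confinement}\}\cap\{\hat\btheta_n\in U(\btheta_0;\delta_0)\})$, which tends to $1$, yielding~\eqref{stab2}.
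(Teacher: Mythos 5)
Your overall strategy matches the paper's: use (C4) and weak asymptotic stability to confine all perturbed local minimizers to the fixed ball $U(\btheta_0;\delta_0)$ with probability tending to one, then use the strict convexity from (C5) to force uniqueness. However, there is a genuine gap at the final step. Strict convexity of each perturbed criterion $M_n(\mZ_n+\mE_n,\cdot)$ on $U(\btheta_0;\delta_0)$ gives you one minimizer \emph{per perturbation} $\mE_n$; it does not make the union over all admissible perturbations ``collapse to a single point,'' as you claim. For fixed $\epsilon>0$ that union is a family of distinct singletons $\{\tilde\btheta(\mE_n)\}$, and its diameter is $\sup_{\mE_1,\mE_2}\|\tilde\btheta(\mE_1)-\tilde\btheta(\mE_2)\|$, which is positive in general. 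What must be shown is that this supremum tends to $0$ as $\epsilon\to0$, i.e.\ that the minimizer depends continuously on the perturbation at $\mE_n=0$. The paper supplies exactly this missing piece: writing $\btheta^*$ and $\tilde\btheta$ for the minimizers of the unperturbed and perturbed criteria over $U(\btheta_0;\delta_0)\cap\bTheta$, it uses (C2)--(C3) to get
$0\geq M_n(\mZ_n,\btheta^*)-M_n(\mZ_n,\tilde\btheta)\geq O_p(1)\,o(1,\epsilon;n)$,
so $\tilde\btheta$ is a near-minimizer of the \emph{unperturbed} criterion, and strict convexity then converts value-closeness into $\|\btheta^*-\tilde\btheta\|\to0$ as $\epsilon\to0$. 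Uniqueness alone, which is all your argument establishes, does not yield this.

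A secondary problem is your proposed mechanism for transferring strict convexity to the perturbed criterion: a uniform Lipschitz ($C^0$-level) bound of the form in (C2) controls the values of $M_n(\mZ_n+\mE_n,\cdot)-M_n(\mZ_n,\cdot)$, but it says nothing about Hessians, so it cannot give you ``robustness of a positive lower bound on the smallest eigenvalue.'' The paper avoids Hessians entirely: it bounds the change in the convexity defect $rM_n(\cdot,\btheta_1)+(1-r)M_n(\cdot,\btheta_2)-M_n(\cdot,r\btheta_1+(1-r)\btheta_2)$ by $O_p(1,n)\,o(1,\epsilon;n)$ directly from (C2), and combines this with (C5). You should replace the eigenvalue argument with this defect-based one, and then add the value-comparison step above to close the proof.
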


\subsection{Asymptotic properties}
In this subsection, we  study asymptotic properties for the proposed
LAMP, including parameter estimation consistency, model selection
consistency and asymptotic stability. Suppose $\beta_{j,0}$ or
$\theta_{j,0}$ is the true value of $\beta_j$ or $\theta_j$,
$\forall j\in \mathds{N}$. Let $q\triangleq
\sum_{j=1}^p\mathrm{\boldsymbol I}_{\beta_{j,0}\neq0}$, the number
of signals. Without loss of generalization, we assume
$\bbeta=(\bbeta_1^\tau,\bbeta_2^\tau)^\tau$, where the true value
$\bbeta_{1,0}$ has no zero element, and $\bbeta_{2,0}=\boldsymbol
0_{p-q}$, which indicates a $(p-q)\times1$ vector with each element
being $0$. Denote
$\zeta_1=\inf\{|\beta_{1,0}|,\cdots,|\beta_{q,0}|\}$,
$\zeta_2=\sup\{|\beta_{1,0}|,\cdots,|\beta_{q,0}|\}$, then
$[\zeta_1,\zeta_2]$ is the range of signal level. For any sequences
$a_n, b_n$ we say $a_n\ll b_n$ if
$\lim\limits_{n\rightarrow\infty}a_n/b_n=0$. Here, we consider the
setting of fixed $p,q,\zeta_1,\zeta_2$ when $n\rightarrow \infty$
though some results may be extended to the case of $p=o(\sqrt n)$.

We first introduce certain regularity conditions which are needed
for establishing asymptotic properties.
\begin{enumerate}
  \item[(C6)]$$\sup_{\stackrel{\btheta\in\bTheta}{0\le k\le
3}}E(\|\bX\|^k|g^{(k)}(\bX^{T}\btheta)|)<\infty, \inf_{{\btheta\in\bTheta}}\lambda_{\min}(\E
\bX g''(\bX^{T}\btheta)\bX^{T})>0,$$
 where $g^{(k)}$ denotes the $k$th derivative of $g$, and $\lambda_{\min}(\cdot)$ of a matrix ``$\cdot$'' denotes its
minimum eigenvalue.
\item[(C7)]  $g'(x)>0$ at $x\in[-\infty,\alpha_1)$ and
$\lim_{\xi\rightarrow-\infty}g'(\xi)=0$; $g''(\xi)$ is increasing at
$[-\infty,\alpha_1)$, where $\alpha_1\leq 0$ is a constant.
\item[(C8)] $\lambda\rightarrow0$,
$\sqrt{n}\lambda\rightarrow+\infty$, and $\sqrt{n}\lambda
g'(\alpha_1-\lambda_0 \zeta_1/\lambda)\rightarrow0.$
 \item[(C9)]
 $$\lambda_{\min}\{
\mathrm{E}[\bX\bX^T]\}>\frac{\lambda_0g''(\alpha_1)}{|g'(\alpha_1)|}\frac{1}{g''(\bX^T\boldsymbol\theta_0)}.$$
\end{enumerate}

\begin{theorem}\label{theo:LAMPasymp}
Suppose that Conditions (C6)-(C9) are satisfied. Then the penalized
maximum likelihood estimator based on the LAMP family is consistent
and asymptotically normal, and achieves model selection consistency
and strong asymptotic stability.

\end{theorem}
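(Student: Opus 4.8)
The plan is to follow the classical oracle-property route (in the spirit of Fan and Li, 2001), establishing in sequence: $\sqrt n$-consistency of a local minimizer, model selection consistency, asymptotic normality of the signal block, and finally strong asymptotic stability by checking that (C6)--(C9) imply Conditions (C1)--(C5) and invoking Lemma \ref{lemma:stability:strong}. Throughout I write $Q_n(\btheta)=-l(\btheta)/n+\sum_{j=1}^p p_\lambda(|\beta_j|)$, so that a minimizer of $Q_n$ is a minimizer of $-\pl$. For consistency I expand $Q_n(\btheta_0+n^{-1/2}\bu)-Q_n(\btheta_0)$. The likelihood part produces a linear term of order $O_p(n^{-1})\|\bu\|$ (the score $\nabla(-l/n)(\btheta_0)=n^{-1}\sum_i(g'(\xi_{i,0})-Y_i)\bX_i$ is mean-zero, hence $O_p(n^{-1/2})$) together with a quadratic term $\tfrac{1}{2n}\bu^T I(\btheta_0)\bu(1+o_p(1))$, where $I(\btheta_0)=\E[g''(\bX^T\btheta_0)\bX\bX^T]$ is positive definite by (C6). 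The penalty increments split by block: on the zero block each term $p_\lambda(n^{-1/2}|u_j|)\ge 0$ only helps the inequality, while on the signal block, monotonicity of $g'$ with (C7)--(C8) forces $p'_\lambda(|\beta_{j,0}|)=o(n^{-1/2})$, making those increments negligible against the quadratic term. Thus for a large radius $\|\bu\|=C$ the difference is positive with probability near one, yielding a local minimizer with $\|\hat\btheta-\btheta_0\|=O_p(n^{-1/2})$; this also discharges (C4).

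For model selection consistency I examine $\partial Q_n/\partial\beta_j$ at any $\btheta$ with $\|\btheta-\btheta_0\|=O_p(n^{-1/2})$ and $j$ in the zero block. The likelihood contribution is $O_p(n^{-1/2})$, whereas $p'_\lambda(|\beta_j|)$ remains of exact order $\lambda$ near the origin, since $\lambda_0|\beta_j|/\lambda\to 0$ when $\sqrt n\lambda\to\infty$. As $\sqrt n\lambda\to\infty$ the penalty derivative dominates and pins the sign of $\partial Q_n/\partial\beta_j$ to that of $\beta_j$, forcing $\hat\beta_j=0$ with probability tending to one; combined with $\hat\beta_j\to\beta_{j,0}\neq 0$ on the signal block this gives selection consistency. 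Restricting to the signal coordinates, the penalized score equation Taylor-expanded about $\btheta_0$ reads $I_{1,n}(\hat\btheta_1-\btheta_{1,0})=-n^{-1}\sum_i(g'(\xi_{i,0})-Y_i)\bX_{i,1}-\bb_n$, where $I_{1,n}\to I_1$ and the penalty bias $\bb_n$ has entries bounded by $\lambda\, g'(\alpha_1-\lambda_0\zeta_1/\lambda)/g'(\alpha_1)=o(n^{-1/2})$ by (C8). The CLT on the score term then gives $\sqrt n(\hat\btheta_1-\btheta_{1,0})\to N(0,I_1^{-1}\Sigma I_1^{-1})$, the oracle limit.

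For strong asymptotic stability I take $m(\bZ,\btheta)=g(\bX^T\btheta)-Y\bX^T\btheta$ and $r_n(\btheta)=\sum_j p_\lambda(|\beta_j|)$ and verify Lemma \ref{lemma:stability:strong}: (C1) holds because $g''>0$ makes each $m(\bZ,\cdot)$ convex and (C6) makes $\bar m$ strictly convex; (C2) follows from the smoothness of $g$ and the moment bounds in (C6); (C3) holds since the Lipschitz constant of $r_n$ is at most $\sup_\beta p'_\lambda(\beta)=p'_\lambda(0)=\lambda\to 0$; (C4) was obtained above; and (C5) is where (C9) enters, since near $\btheta_0$ the Hessian of $Q_n$ equals $n^{-1}\sum_i g''(\xi_i)\bX_i\bX_i^T+\mathrm{diag}(p''_\lambda(|\beta_j|))$, whose penalty part has magnitude at most $\lambda_0 g''(\alpha_1)/|g'(\alpha_1)|$, attained at the zero coordinates because $g''$ is increasing on $(-\infty,\alpha_1)$ by (C7). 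Thus (C9) guarantees the total Hessian is positive definite on $U(\btheta_0;\delta_0)$ with probability tending to one, and Lemma \ref{lemma:stability:strong} delivers strong asymptotic stability.

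The main obstacle is the tension inside (C8): a single $\lambda$ must satisfy $\sqrt n\lambda\to\infty$ to kill the zero coordinates in the sparsity step, yet $\sqrt n\lambda\, g'(\alpha_1-\lambda_0\zeta_1/\lambda)\to 0$ to render the signal-block bias negligible for normality. These pull $\lambda$ in opposite directions and can only be reconciled through the likelihood-adaptive shape of the LAMP penalty: by (C7) the left tail $g'(\alpha_1-\lambda_0\beta/\lambda)$ collapses as $\lambda\to 0$, so although $p'_\lambda(0)=\lambda$ is large on the $n^{-1/2}$ scale, $p'_\lambda$ evaluated at the signal level is comparatively tiny. Making this separation quantitatively precise, uniformly over the $O_p(n^{-1/2})$ neighborhood (where I must control $p'_\lambda$ at arguments slightly below $\zeta_1$, not just at $\zeta_1$), is the delicate point; a secondary difficulty is upgrading the local-convexity bound from the single point $\btheta_0$ to the fixed neighborhood $U(\btheta_0;\delta_0)$ uniformly in $n$, which is exactly what (C9) together with the monotonicity of $g''$ is designed to secure.
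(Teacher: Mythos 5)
Your proposal is correct and follows essentially the same route as the paper: the paper's own proof simply computes $p_{1,n},\dots,p_{4,n}$ for the LAMP penalty and verifies the hypotheses of the general oracle result (Lemma \ref{lemma:oracle}) together with Conditions (C1)--(C5) for Lemma \ref{lemma:stability:strong}, which is exactly the content you re-derive inline (the Taylor expansion for $\sqrt n$-consistency, the sign argument on $\partial Q_n/\partial\beta_j$ using $\sqrt n\lambda\to\infty$ versus $p'_\lambda(\zeta_1)=o(n^{-1/2})$, the penalized score equation for normality, and (C9) $\Rightarrow$ (C5) via the bound $\sup_{\beta\ge 0}|p''_\lambda(\beta)|=\lambda_0 g''(\alpha_1)/|g'(\alpha_1)|$). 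The only difference is presentational: the paper cites its appendix lemma where you reprove its steps for the LAMP special case.
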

\begin{lemma}\label{lemma:rate}
Suppose that (C6) and (C7) are satisfied and
$\log[g'(-x)]=x^uL(x)$, where $u>0$ is a constant and $L(x)$ is
negative and slowly varying at $\infty$, i.e., for any $a>0$,
$\lim_{x\rightarrow \infty}[L(ax)/L(x)]=1,$ and
$\lim_{x\rightarrow+\infty}L(x)\in[-\infty,0)$.  Then $n^{-1/2}\ll
\lambda\ll \lambda_0(\log n)^{-1/u}\ll 1$ implies (C8).

\end{lemma}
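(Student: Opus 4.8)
The plan is to check the three requirements of (C8) one at a time, noting that only the third has any content. The ordering $n^{-1/2}\ll\lambda\ll\lambda_0(\log n)^{-1/u}\ll1$ yields at once $\lambda\to0$ (compose the last two relations: $a_n\ll b_n$ with $b_n\to0$ forces $a_n\to0$ for positive sequences) and $\sqrt n\lambda\to+\infty$ (rewrite $n^{-1/2}\ll\lambda$ as $1/(\sqrt n\lambda)\to0$). Hence the entire problem reduces to establishing
\[
\sqrt n\,\lambda\,g'\!\Bigl(\alpha_1-\frac{\lambda_0\zeta_1}{\lambda}\Bigr)\to0.
\]

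Next I would set $x_n=-\alpha_1+\lambda_0\zeta_1/\lambda$, so that the argument of $g'$ equals $-x_n$. Because $\lambda\ll\lambda_0(\log n)^{-1/u}$ forces $\lambda/\lambda_0\to0$, hence $\lambda_0/\lambda\to+\infty$, while $\alpha_1\le0$ and $\zeta_1>0$ are fixed, we get $x_n\to+\infty$ and $x_n\sim\lambda_0\zeta_1/\lambda$. Since $-x_n<\alpha_1$, Condition (C7) guarantees $g'(-x_n)>0$, so the hypothesis $\log g'(-x)=x^{u}L(x)$ applies and I may pass to logarithms; with $L<0$ the target becomes
\[
\tfrac12\log n+\log\lambda-x_n^{u}\,|L(x_n)|\to-\infty.
\]
From $x_n\sim\lambda_0\zeta_1/\lambda$ I obtain $x_n^{u}\sim\zeta_1^{u}(\lambda_0/\lambda)^{u}$, and the hypothesis $\lambda\ll\lambda_0(\log n)^{-1/u}$ is precisely $(\lambda_0/\lambda)^{u}/\log n\to+\infty$. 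As $L$ has a negative limit, $|L(x_n)|$ is eventually bounded below by a positive constant (and diverges when the limit is $-\infty$), so $x_n^{u}|L(x_n)|\gg\log n$. Because $\log\lambda\le0$ for large $n$ and $\tfrac12\log n=o\bigl(x_n^{u}|L(x_n)|\bigr)$, the displayed expression is dominated by $-x_n^{u}|L(x_n)|\to-\infty$, proving the third requirement and hence (C8).

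The step that deserves the most care is the inequality $x_n^{u}|L(x_n)|\gg\log n$. The negative limit of $L$ makes $|L(x_n)|$ bounded away from $0$, so this reduces to $(\lambda_0/\lambda)^{u}\gg\log n$, which is merely a restatement of the hypothesis; the real content is structural. Slow variation of $L$ forces $g'(-x)=\exp(x^{u}L(x))$ to decay essentially like $\exp(-c\,x^{u})$ up to sub-polynomial corrections, so the scale at which this decay balances the polynomial factor $\sqrt n\lambda$ is $x\asymp(\log n)^{1/u}$, equivalently $\lambda\asymp\lambda_0(\log n)^{-1/u}$. This is exactly why the stated threshold takes the form $\lambda_0(\log n)^{-1/u}$ and why the strict relation $\ll$, rather than $\lesssim$, is what tips the logarithm to $-\infty$; everything else is order-of-magnitude bookkeeping.
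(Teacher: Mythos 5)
Your proposal is correct and follows essentially the same route as the paper's proof: reduce (C8) to its third condition, take logarithms so the target becomes $\tfrac12\log n+\log\lambda+x_n^{u}L(x_n)\to-\infty$, and observe that $\lambda\ll\lambda_0(\log n)^{-1/u}$ is exactly $(\lambda_0/\lambda)^{u}\gg\log n$ while $|L|$ is eventually bounded below because its limit lies in $[-\infty,0)$. Your handling of the $|L(x_n)|$ lower bound is in fact slightly more explicit than the paper's, but the argument is the same.
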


It can be verified that the condition on $\log[g'(-x)]$ is
satisfied for the logistic, Poisson and probit regression models.
To achieve model selection consistency, we can choose
$\lambda_0=o(1)$ and  $\lambda=o((\log n)^{-1/u} \lambda_0).$

We now present the corresponding results for the case of sigmoid
penalty. Here, $\log(g'(-x))=xL(x),$ where $L(x)=[-\log(1+e^x)/x]$
is a slowly varying function. The following conditions simplify
(C6)-(C9).

\begin{enumerate}
\item[(C6')] $\E \|\bX\|^3<+\infty$.

\item[(C7')] Parameter $\rho=e^{\alpha_1}$ is a constant (does not
depend on $n$) and  $n^{-\frac{1}{2}}\ll\lambda\ll\lambda_0/\log
n\ll1$.
\item[(C8')]
$$\lambda_0<\lambda_{\min}
[\mathrm{E}(\bX\bX^T\frac{e^{\bX^T\boldsymbol\theta_0}}{(1+e^{
\bX^T\boldsymbol\theta_0})^2})](1+\rho)/\rho.$$
\end{enumerate}

The following proposition shows that the results of Theorem
\ref{theo:LAMPasymp} carry over to the penalized logistic
regression when (C6')--(C8') are assumed.

\begin{corollary}
For the penalized maximum likelihood estimator of logistic
regression with the sigmoid penalty \eqref{eq::sigmoid}, we have
parameter estimation consistency, model selection consistency and
strong asymptotic stability under conditions (C6')-(C8').
\end{corollary}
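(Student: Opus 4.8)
The plan is to derive the Corollary as a direct specialization of Theorem~\ref{theo:LAMPasymp}. Since that theorem already delivers all four conclusions once (C6)--(C9) hold, it suffices to check that, for logistic regression, the streamlined hypotheses (C6')--(C8') imply the general conditions (C6)--(C9). Throughout I would work with $g(\xi)=\log(1+e^{\xi})$, recording that $g'(\xi)=e^{\xi}/(1+e^{\xi})\in(0,1)$, $g''(\xi)=g'(\xi)(1-g'(\xi))\in(0,1/4]$ and $g'''$ are all bounded on $\mathds{R}$, and that at the location parameter $g'(\alpha_1)=\rho/(1+\rho)$, $g''(\alpha_1)=\rho/(1+\rho)^{2}$, where $\rho=e^{\alpha_1}$. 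These boundedness and evaluation facts drive every step.

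First I would clear (C6) and (C7). For the moment part of (C6), boundedness of $g^{(k)}$ for $k=1,2,3$ reduces $\sup_{\btheta,k}\E(\|\bX\|^{k}|g^{(k)}(\bX^T\btheta)|)$ to a constant times $\E\|\bX\|^{3}$, which is finite by (C6'); the $k=0$ term uses the linear bound $g(\xi)\le\log 2+|\xi|$ together with compactness of $\bTheta$. For the eigenvalue part, $g''>0$ everywhere gives $a^{T}\E[\bX g''(\bX^T\btheta)\bX^T]a=\E[g''(\bX^T\btheta)(a^{T}\bX)^{2}]>0$ for every unit vector $a$ with $\P(a^{T}\bX\ne0)>0$; continuity in $\btheta$ (dominated convergence, $g''$ bounded, $\E\|\bX\|^{2}<\infty$) plus compactness of $\bTheta$ and of the sphere then promotes positivity to a strictly positive infimum, the required non-degeneracy of the design being supplied by (C8'). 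Condition (C7) is automatic for the logit link: $g'>0$ with $g'(\xi)\to0$ as $\xi\to-\infty$, while $g''$ is unimodal with maximum at $0$ and hence increasing on $(-\infty,\alpha_1]$ because $\alpha_1\le0$; note that (C7') fixes $\rho$, i.e.\ $\alpha_1$, as a constant, exactly as (C7) asks.

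Next I would obtain the rate condition (C8) from (C7') via Lemma~\ref{lemma:rate}. For the logit link $\log[g'(-x)]=-\log(1+e^{x})=xL(x)$ with $L(x)=-\log(1+e^{x})/x$, which is negative, slowly varying, and converges to $-1$, so the lemma applies with exponent $u=1$; hence $n^{-1/2}\ll\lambda\ll\lambda_0(\log n)^{-1}\ll1$ implies (C8), and this is precisely the rate imposed in (C7'). For (C9) I would substitute the logistic evaluations, using $g''(\alpha_1)/g'(\alpha_1)=1/(1+\rho)$ and $g''(\bX^T\btheta_0)=e^{\bX^T\btheta_0}/(1+e^{\bX^T\btheta_0})^{2}$, and reorganize the curvature requirement into a lower bound on $\lambda_{\min}\{\E[\bX\bX^T g''(\bX^T\btheta_0)]\}$, i.e.\ on the Fisher information, of the form required by (C8'); with the natural choice $\alpha_1=0$ ($\rho=1$) the two bounds coincide.

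I expect this last reduction, (C9)$\,\Leftarrow\,$(C8'), to be the crux, since it is exactly where sparsity must be traded against stability. The mechanism is that the maximal concavity of the sigmoid penalty, $\sup_{\beta\ge0}|p''_\lambda(\beta)|=\lambda_0 g''(\alpha_1)/g'(\alpha_1)$, attained at $\beta=0$ (because $g''$ increases on $(-\infty,\alpha_1]$, so the penalty is least concave away from the origin), must be dominated by the smallest eigenvalue of the information matrix so that the penalized negative log-likelihood is locally strictly convex near $\btheta_0$ with probability tending to one; this is precisely the content behind condition (C5) in Theorem~\ref{theo:LAMPasymp} and the source of strong asymptotic stability. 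Once (C6)--(C9) are in hand, Theorem~\ref{theo:LAMPasymp} yields parameter estimation consistency, model selection consistency, and strong asymptotic stability, proving the Corollary.
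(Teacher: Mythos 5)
Your proposal follows exactly the route the paper intends: verify that (C6')--(C8') imply the general conditions (C6)--(C9) for the logit link (using Lemma~\ref{lemma:rate} with $u=1$ to get the rate condition (C8) from (C7')) and then invoke Theorem~\ref{theo:LAMPasymp}; the paper gives no separate proof of the corollary beyond this reduction. Your verification is correct and in places more careful than the paper's own presentation, including your observation that the curvature bounds in (C8') and (C9) coincide exactly only at $\rho=1$ --- for $\rho<1$ they differ by a factor of $\rho$, a discrepancy that originates in the paper's statement of the conditions rather than in your argument.
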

\section{Algorithms}

An important aspect of the penalized likelihood estimation method
is the computational efficiency. For the LASSO penalty,
\cite{leastangle} proposed the path-following LARS algorithm. In
\cite{friedman07}, the coordinate-wise descent method was
proposed. It optimizes a target function with respect to a single
parameter at a time, cycling through all parameters until
convergence is reached. For non-convex penalties, \cite{Fan01}
used the LQA approximation approach. In \cite{onestep}, a local
linear approximation type method was proposed for maximizing the
penalized likelihood for a broad class of penalty functions. In
\cite{FanLv11}, the coordinate-wise descent method was implemented
for non-convex penalties as well. \cite{YuFeng2012} proposed a
hybrid approach of Newton-Raphson and coordinate descent for
calculating the approximate path for penalized likelihood
estimators with both convex and non-convex penalties.

\subsection{Iteratively reweighted least squares}
We apply quadratic approximation and use the coordinate decent
algorithm similar to \cite{fasttracking} and \cite{viol}. Recall
that  our objective function is the negative penalized
log-likelihood $-\tilde
l({\btheta})=-l(\btheta)+n\sum_{j=1}^pp_{\lambda}(|\beta_j|).$
Following  \cite{fasttracking}, let $F_j=\frac{\partial
l(\btheta)}{\partial\theta_j},$ and define the violation function
$$\mathrm{viol}_j(\btheta)\triangleq\left\{
                       \begin{array}{ll}
                         |F_j|, & \mbox{if } j=0, \\
                         \max\{0,-n\lambda-F_j,-n\lambda+F_j\}, & \hbox{if $\theta_j
                           =0, \; j>0$,} \\
                         |F_j-n\mathrm{sgn}(\theta_j)p_{\lambda}'(|\theta_j|)|, & \hbox{if $\theta_j\neq0, \;j>0$.}
                       \end{array}
                     \right.$$

We see that the objective function achieves its maximum value if and
only if $\mathrm{viol}_j=0,$ for all $j$. Thus we use $\max_j\{\mathrm{viol}_j\}<\tau$
as the stop condition of our iteration, with  $\tau>0$ being the chosen tolerance threshold.

In each step, we use quadratic approximation to the log-likelihood
function $-l(\btheta)\approx\frac{1}{2}
(\tilde{\bY}-\mX\btheta)^T\mW(\tilde{\bY}-\mX\btheta)+\mathrm{constant},$
where $\mW$ and $\tilde{\bY}$ depend on the current value of
$\btheta$. The algorithm is summarized as follows.

 \textbf{Algorithm:} Set values for $\tau>0, \lambda,\lambda_0,\rho>0$. Denote by $\bX_{\cdot
j}$ the $(j+1)$th column of $\mX,j=0,\cdots,p.$
\begin{itemize}
  \item [S1.] Standardize $\bX_i,i=1,2,\cdots,n.$
  \item [S2.] Initialize $\btheta=\btheta^{(0)}$.
  Calculate $\mathrm{viol}=\max\{\mathrm{viol}_j(\btheta)\}$ and go
to S3 if $\mathrm{viol}>\tau$ or else go to S5.
  \item [S3.] Choose $j^*\in \arg\max_j \mathrm{viol}_{j}(\btheta)$.
Calculate $\mW,v=\frac{1}{n}\bX'_{\cdot j^*}\mW\bX_{\cdot j^*},$
$z=\frac{1}{n}\bX'_{\cdot j^*}$
$\mW(\tilde{\bY}-\mX\btheta)+v\theta_{j^*}.$ If $j^*\neq 0,$ let
$r=p'_{\lambda}(|\theta_j^*|)$; else $r=0$.
  \item [S4.] If $j^*=0, \theta_{j^*}=z/v;$ else
 $\theta_{j^*}=\mbox{sign}(z)(|z|-r)_+$.

Calculate $\mathrm{viol}=\max_j\{\mathrm{viol_j}\}$ and go to
S3 if $\mathrm{viol}>\tau$,  else go to S5.
  \item [S5.] Do the transformation of the coefficients $\theta$ due to
standardization.
\end{itemize}

For S2, the initial solution $\btheta^{(0)}$ can be taken as the zero solution or the MLE or the estimate calculated using a parameter $\lambda^*\in U(\lambda;\epsilon)$ from the previous steps.

For S4, we first carry out the iterations for the variables in the current active set until convergence, then check whether additional variables should join the active set. Alternatively, we may speed up the calculation  by using ``warm start". Readers are referred to \cite{fasttracking}
and \cite{glmnetpackage} for details of the strategies to speed up calculation in coordinate descent algorithms.

For example, the logistic regression with sigmoid penalty is,
$\sum_{i=1}^n$
$\log(1+e^{-Y_i\bX^T_i\theta})+n\sum_{j=1}^pp_{\lambda}(|\beta_j|),$
where $p_{\lambda}(\theta)=
\frac{\lambda^2_n(1+\rho)}{\lambda_0}\log[(1+\rho)e^{\lambda_0/\lambda\theta}/(1+\rho
e^{\lambda_0/\lambda\theta})],$ for $\theta>0.$ We define
$r_i=\exp(-Y_i\bX^T_i\btheta),i=1,\cdots,n$, and
$F_j(\theta)=\sum_{i=1}^nr_iY_i X_{i,j}/(1+r_i),j=0,\cdots,d$.  In
S3, we have the following two different approximation methods for
updating.
\begin{enumerate}
  \item Quadratic approximation from IRLS \citep{IRLS06}. Let $\mW=$
  $\frac{1}{2}$ $\mathrm{diag}\{\tanh(\frac{\pi_1}{2})$ $/
\pi_1,\cdots,\tanh(\frac{\pi_n}{2})/\pi_n\}$ and
$\tilde{\bY}=\frac{1}{2}\mW^{-1}\bY,$ where $
\pi_i=\bX^T_i\btheta^{(0)} Y_i$.

  \item Quadratic approximation using Taylor expansion.
  Let $\mW=\mathrm{diag}\{\pi_1(1-\pi_1),\cdots,\pi_n(1-\pi_n)\}$ and $\tilde{\bY}=\bX\btheta^{(0)}+\mW^{-1}[\bY\circ(1-\bpi)],$
where $\pi_i=1/[1+\exp(-Y_i\bX_i^T\btheta^{(0)})]$ and $\circ$ is
the component-wise product operator.
\end{enumerate}

For an initial estimator $\btheta^{(0)}$, denote by $a(\btheta,\btheta^{(0)})$ a quadratic approximation of $-n^{-1}l(\btheta)$ at $\btheta^{(0)}$, i.e.,
$$a(\btheta^{(0)},\btheta^{(0)})=-n^{-1}l(\btheta^{(0)}),~\frac{\partial{
a(\btheta,\btheta^{(0)})}}{
\partial{\btheta}}|_{\btheta=\btheta^{(0)}}=-n^{-1}l'(\btheta^{(0)}).$$
\begin{theorem}
Let $\mathcal{C}\subset R^d$ be a closed set and the objective
function
$M_n(\btheta)=-n^{-1}l(\btheta)+n\sum_{j=1}^pp_{\lambda}(|\theta_j|)$
is strictly convex. In addition, assume the quadratic
approximation at $\btheta^{(0)}$ satisfies
$a(\btheta,\btheta^{(0)})\geq -n^{-1}l(\btheta)$ for all
$\btheta\in \mathcal{C}$. Then the algorithm constrained in
$\mathcal{C}$ (minimization within $\mathcal{C}$) converges
to the true minimum
$\theta^*=\mathrm{arg}\min\limits_{\btheta}M_n(\btheta)$. In
addition, method 1 for the logistic regression satisfies the
conditions on quadratic approximation.
\label{theo:algorithmconverge}
\end{theorem}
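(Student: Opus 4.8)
The plan is to recognize the algorithm as a majorization–minimization (MM) scheme driven by a Gauss–Southwell (maximum-violation) coordinate selection, and to argue convergence in three stages before verifying the claim about Method~1. First I would construct, at the current iterate $\btheta^{(k)}$, the surrogate
$$Q(\btheta\mid\btheta^{(k)})=a(\btheta,\btheta^{(k)})+n\sum_{j=1}^p\big[p_{\lambda}(|\theta_j^{(k)}|)+p'_{\lambda}(|\theta_j^{(k)}|)(|\theta_j|-|\theta_j^{(k)}|)\big],$$
which pairs the quadratic approximation of the likelihood with the local linear (tangent-line) approximation of each penalty term. The hypothesis $a(\btheta,\btheta^{(k)})\ge -n^{-1}l(\btheta)$ majorizes the likelihood part, while concavity of $p_{\lambda}$ (established in the Remark on the exponential family) makes each tangent line a majorizer of its penalty term; hence $Q(\cdot\mid\btheta^{(k)})\ge M_n(\cdot)$ on $\mathcal{C}$, with equality and matching gradient at $\btheta^{(k)}$. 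The closed-form soft-thresholding update in S4 is exactly the coordinate-wise minimizer of this quadratic-plus-$\ell_1$ surrogate, so each step gives $Q(\btheta^{(k+1)}\mid\btheta^{(k)})\le Q(\btheta^{(k)}\mid\btheta^{(k)})$.

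Chaining the majorization with this surrogate decrease yields the descent inequality $M_n(\btheta^{(k+1)})\le Q(\btheta^{(k+1)}\mid\btheta^{(k)})\le Q(\btheta^{(k)}\mid\btheta^{(k)})=M_n(\btheta^{(k)})$, so the objective values are nonincreasing. Since $M_n$ is strictly convex and the iterates remain in the closed set $\mathcal{C}$ (with bounded sublevel sets inherited from the compactness of $\bTheta$), the sequence $M_n(\btheta^{(k)})$ converges and the iterates lie in a compact set. I would then show every limit point is the global minimizer: extract $\btheta^{(k_m)}\to\bar\btheta$ and argue by contradiction that if $\max_j\mathrm{viol}_j(\bar\btheta)>0$, then continuity of the violation function keeps the selected coordinate's violation bounded away from $0$ along nearby iterates, so exact coordinate minimization of the strictly convex surrogate forces an objective decrease bounded away from $0$, contradicting convergence of $M_n(\btheta^{(k)})$. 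Thus $\mathrm{viol}_j(\bar\btheta)=0$ for all $j$, which are precisely the KKT conditions; strict convexity makes the unique such point $\btheta^*$, and since all limit points agree the whole sequence converges to $\btheta^*$.

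For the final assertion about Method~1 in logistic regression I would verify the majorization hypothesis directly. The per-observation loss $\log(1+e^{-t})$ admits the Jaakkola–Jordan variational quadratic upper bound, tangent at the current linear predictor $\pi_i=\bX_i^T\btheta^{(0)}Y_i$ and with curvature exactly $\tfrac12\tanh(\pi_i/2)/\pi_i$, which is the $i$th diagonal entry of the weight matrix $\mW$ prescribed in Method~1. Summing over $i$ gives $a(\btheta,\btheta^{(0)})\ge -n^{-1}l(\btheta)$ with equality and matching gradient at $\btheta^{(0)}$, establishing the required property; by contrast, the Taylor-based Method~2 is only a local second-order approximation and need not majorize.

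I expect the main obstacle to be the third stage: making rigorous the claim that positive violation at a limit point forces a uniform objective decrease, because the surrogate is rebuilt at every step and the penalty is nonsmooth at $\theta_j=0$. This is where I would invoke Tseng's convergence theory for block coordinate descent applied to the sum of a smooth (majorized quadratic) term and a separable convex ($\ell_1$) term, after checking its regularity hypotheses; the strict convexity of $M_n$ together with the maximum-violation selection rule is what converts coordinatewise stationarity into global optimality and secures uniqueness of the limit.
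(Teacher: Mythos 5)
Your proposal is correct and follows essentially the same route as the paper's proof: both treat the scheme as an MM iteration with the assumed quadratic majorizer of the likelihood plus the tangent-line majorizer of the concave penalty, derive monotone descent of $M_n$, pass to a convergent subsequence in a compact sublevel set, use the maximum-violation selection rule to upgrade stationarity in the chosen coordinate to stationarity in all coordinates, and conclude by strict convexity; the Jaakkola--Jordan $\tanh$ bound you cite for Method~1 is exactly the inequality the paper invokes. The only cosmetic difference is that you frame the limit-point step as a contradiction via a uniform decrease (with Tseng's theory as a fallback), whereas the paper takes limits directly in the surrogate, but the underlying argument is the same.
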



\subsection{Balance between stability and parsimony}
\label{sec:alg:bal} An important issue is how to choose tuning
parameters in the penalized likelihood estimation.  For the LAMP
family, there are three tuning parameters, i.e., $\lambda$,
$\lambda_0$ and $\alpha_1$. Our numerical experiences show that
the resulting estimator is not sensitive to the choice of
$\alpha_1$. In most cases, we may simply take $\alpha_1=-1$ or
$\alpha_1=0$, depending on the type of regression (See Remark
\ref{remark::alpha_1_choice}).  For $\lambda$ and $\lambda_0$, we
recommend using cross-validation or BIC, so long as the solutions
are stable enough. The \pkg{ncvreg} package described in
\cite{coordinate11} to determine a stable area or perform local
diagnosis is recommended. There are two approaches to get a stable
area: to control the smoothness of the $\lambda$-estimate curve
and calculate the smallest eigenvalue of the penalized likelihood
at each point of the path as stated in Theorem 1. Here, we take
the second approach in all numerical analysis.

 Our algorithm differs from \pkg{ncvreg} in the following two aspects:
we use the ``viol" function as the convergence criteria;
we do not use the adaptive-scale. 
Both algorithms 1 and 2 use the linear approximation (suppose at
$\theta^{(0)}$) of the penalty term. $p_{\lambda}(|\theta|)\approx
p_{\lambda}(|\theta^{(0)}|)+p'_{\lambda}(|\theta^{(0)}|)(|\theta|-|\theta_0|).$
For algorithm 1, from concavity,  we have $p_{\lambda}(|\theta|)<
p_{\lambda}(|\theta^{(0)}|)+p'_{\lambda}(|\theta^{(0)}|)(|\theta|-|\theta_0|),$
 which naturally falls into the MM-algorithm framework.

To choose the $(\lambda_0, \lambda)$ pair, we use the  hybrid approach introduced by
\cite{coordinate11}, i.e.,  combining BIC, cross-validation, and
convexity diagnostics. For a path of solutions with a
given value of $\lambda_0$ large enough, use AIC/BIC to select
$\lambda$ and use the convexity diagnostics to determine the locally
convex regions of the solution path. If the chosen solution lies
outside the stable region, we lower $\lambda_0$ to make the
penalty more convex. After this process is iterated multiple times, we can
 find a value of $\lambda_0$
 that produces a good balance between sparsity and
convexity. Then we can fix $\lambda_0$ and use BIC or cross-validation to choose the best $\lambda$.
\section{Simulation results and examples}\label{sec:simulation}
Simulation studies cover logistic, Poisson and probit cases. The
performance of the LAMP family is compared with those of LASSO, SCAD
and MCP. Particular attention is given to the logistic regression to
demonstrate how sparsity and stability are properly balanced. Two
classification examples from microarray experiments involving cancer
patients are presented.
\subsection{Logistic regression}\label{sec:simu:logisticre}

We simulate from the logistic regression model with $n=200$,
$p=1000$, $\alpha=0$, $\bbeta=(1.5,1,-0.7,\boldsymbol
0_{997}^T)^T$, and $\bX\sim N(\boldsymbol 0, \bSigma)$, where
$\Sigma_{i,j}=\rho + (1-\rho)1_{i= j}$ with $\rho=0.5$. The number
of replications is 100 for this and all the subsequent
simulations.

Table 1 reports true positive (TP), false positve (FP),
proportion of correct fit (cf), proportion of over fit (of),
proportion of under fit (uf), $|\hat\bbeta-\bbeta|_1$ (L1 loss), and
$\|\hat\bbeta-\bbeta\|^2$ (L2 loss). To compare performances among LASSO, SCAD, MCP and the sigmoid
penalty, we use \pkg{glmnet} to calculate the LASSO solution path, \pkg{ncvreg} to calculate the SCAD and MCP. For all penalties, we use EBIC \citep{ebic} to choose the tuning
parameter $\lambda$ with other parameters fixed. The EBIC
parameter $\eta=1$.

From Table \ref{tab:logisticcv}, it is clear that the sigmoid penalty outperforms SCAD and MCP in the sense that at a similar level of TP, the sigmoid penalty results in a smaller FP. Somewhat surprisingly the LASSO has a competitive performance, which may be attributed to the use of the EBIC selection criterion.

\begin{table}\caption{ Simulation results for model selection and estimation under different penalties. Entries are mean values  with the standard error in parentheses. Here ``sig" represents the sigmoid penalty. The number inside the parentheses following by the penalty name represents the concavity parameter. \label{tab:logisticcv}
}  \begin{center}
\begin{tabular}{llllllll}
  \hline
  Penalties&TP&FP&cf&of&uf&L1&L2\\\hline
LASSO&1.87 (0.053)&0.12 (0.041)&0.05&0.01&0.94&2.49&1.46\\
sig(.02)&1.87 (0.053)&0.11 (0.040)&0.05&0.01&0.94&2.48&1.45\\
sig(.03)&1.90 (0.052)&0.13 (0.042)&0.06&0.01&0.93&2.41&1.41\\
sig(.05)&1.95 (0.05)&0.15 (0.041)&0.08&0.01&0.91&2.34&1.36\\
sig(.07)&1.99 (0.048)&0.20 (0.047)&0.09&0.02&0.89&2.25&1.31\\
sig(.09)&2.02 (0.047)&0.20 (0.049)&0.1&0.02&0.88&2.17&1.26\\
sig(.15)&2.21 (0.057)&1.86 (0.43)&0.1&0.19&0.71&3.18&1.36\\
sig(.38)&2.61 (0.049)&4.65 (0.40)&0.1&0.51&0.39&6.23&2.13\\
SCAD(300)&1.69 (0.073)&0.14 (0.043)&0.04&0.01&0.95&2.94&1.74\\
SCAD(7)&1.69 (0.073)&0.14 (0.043)&0.04&0.01&0.95&2.94&1.74\\
SCAD(5)&1.84 (0.077)&3.20 (0.69)&0.03&0.11&0.86&2.94&1.74\\
SCAD(4)&2.25 (0.089)&12 (0.75)&0.01&0.45&0.54&91.9&23.5\\
MCP(300)&1.69 (0.073)&0.14 (0.043)&0.04&0.01&0.95&2.94&1.74\\
MCP(50)&1.70 (0.073)&0.16 (0.047)&0.04&0.01&0.95&2.94&1.74\\
MCP(15)&1.75 (0.073)&0.14 (0.043)&0.06&0.01&0.93&2.45&1.45\\
MCP(7)&1.79 (0.071)&0.14 (0.040)&0.07&0.01&0.92&2.21&1.31\\
MCP(5)&1.91 (0.081)&2.98 (0.69)&0.09&0.1&0.81&2.07&1.23\\
MCP(4)&2.29 (0.092)&11.89 (0.80)&0.02&0.50&0.48&109.3&28.7\\
\end{tabular}
\end{center}

\end{table}

\subsection{Smoothness}\label{sec:simu:shapeeffect}
\begin{table}
\caption{Simulation results for TP and FP under different penalties with the tuning parameter selected by BIC. Mean values are presented. \label{tab:logisticbic}}
\begin{center}

\begin{tabular}{lll|lll|lll}
  \hline
  \multicolumn{3}{c|}{sigmoid}&\multicolumn{3}{c|}{SCAD}&\multicolumn{3}{c}{MCP} \\ \hline
  $\lambda_0$&TP&FP&$a$&TP&FP&$\gamma$&TP&FP\\\hline
   .04 & 1.86 & 0.28 & 1.1&1.89&0.21& 1.1 & 1.90 & 0.21 \\
   .05 & 1.81 & 0.22  & 7 & 1.89&0.21& 7 &1.90& 0.21\\
   .06 & 1.84 & 0.24 & 14 & 1.89& 0.21& 14 & 1.90& 0.21\\
   .08 & 1.89 & 0.22 &20& 1.89& 0.21& 20& 1.90& 0.21\\
   .10 & 1.91 & 0.22 & 27& 1.88& 0.23& 27& 1.89& 0.22\\
   .11 & 1.89 & 0.21 & 34 & 1.83 & 0.23& 34 & 1.86 & 0.24 \\
   .13 & 1.89 & 0.21 & 41 & 1.83 & 0.24& 41 & 1.83 & 0.23 \\
   .15 & 1.89 & 0.21 & 54 & 1.83 & 0.28& 54 & 1.83 & 0.29\\
   .16 & 1.89 & 0.21 &60 & 1.86 & 0.32 &60 & 1.86 & 0.32\\
   .18 & 1.89 & 0.21&67 & 1.88 & 0.35 &67 & 1.88 & 0.36 \\
\end{tabular}

\end{center}

\end{table}

\begin{figure}[!t]\caption{Solution paths  under different penalties. \label{fig:solutionpathsimu}}
\begin{center}

\subfigure[sigmoid(0.1)]{\label{fig:subfig:i}\resizebox{2in}{2in}{\includegraphics{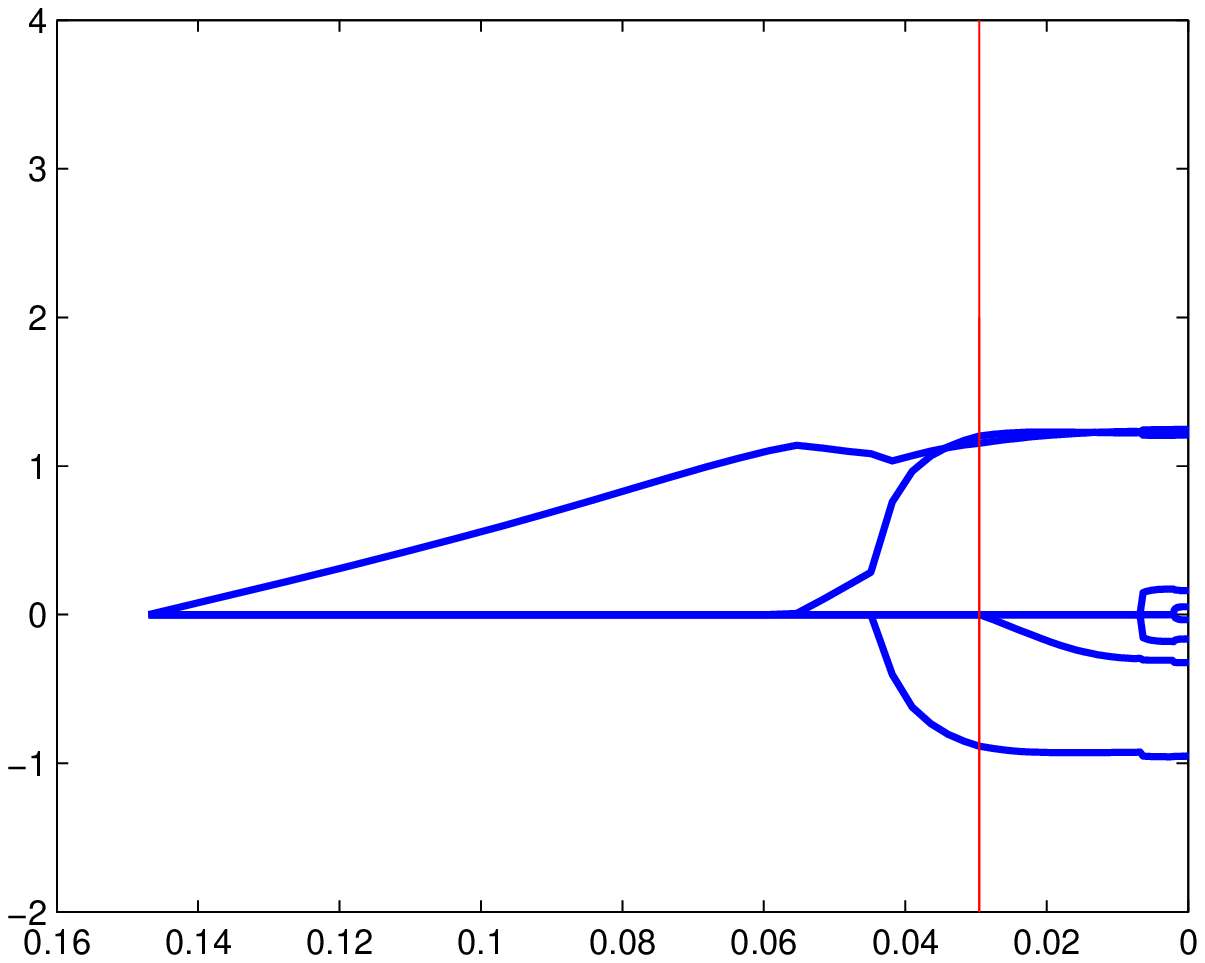}}}
\subfigure[LASSO]{\label{fig:subfig:i}\resizebox{2in}{2in}{\includegraphics{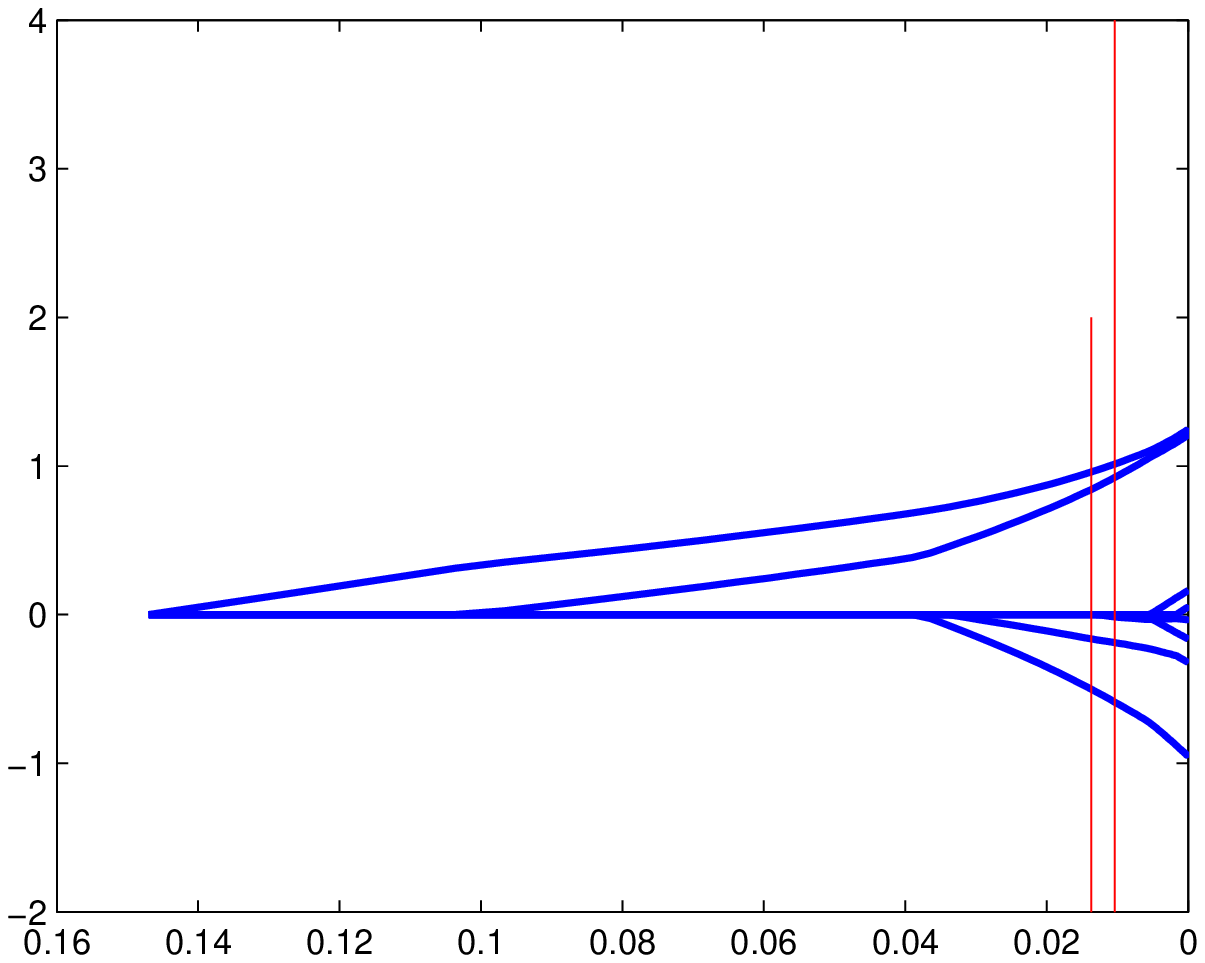}}}
\subfigure[SCAD(20)]{\label{fig:subfig:a}\resizebox{2in}{2in}{\includegraphics{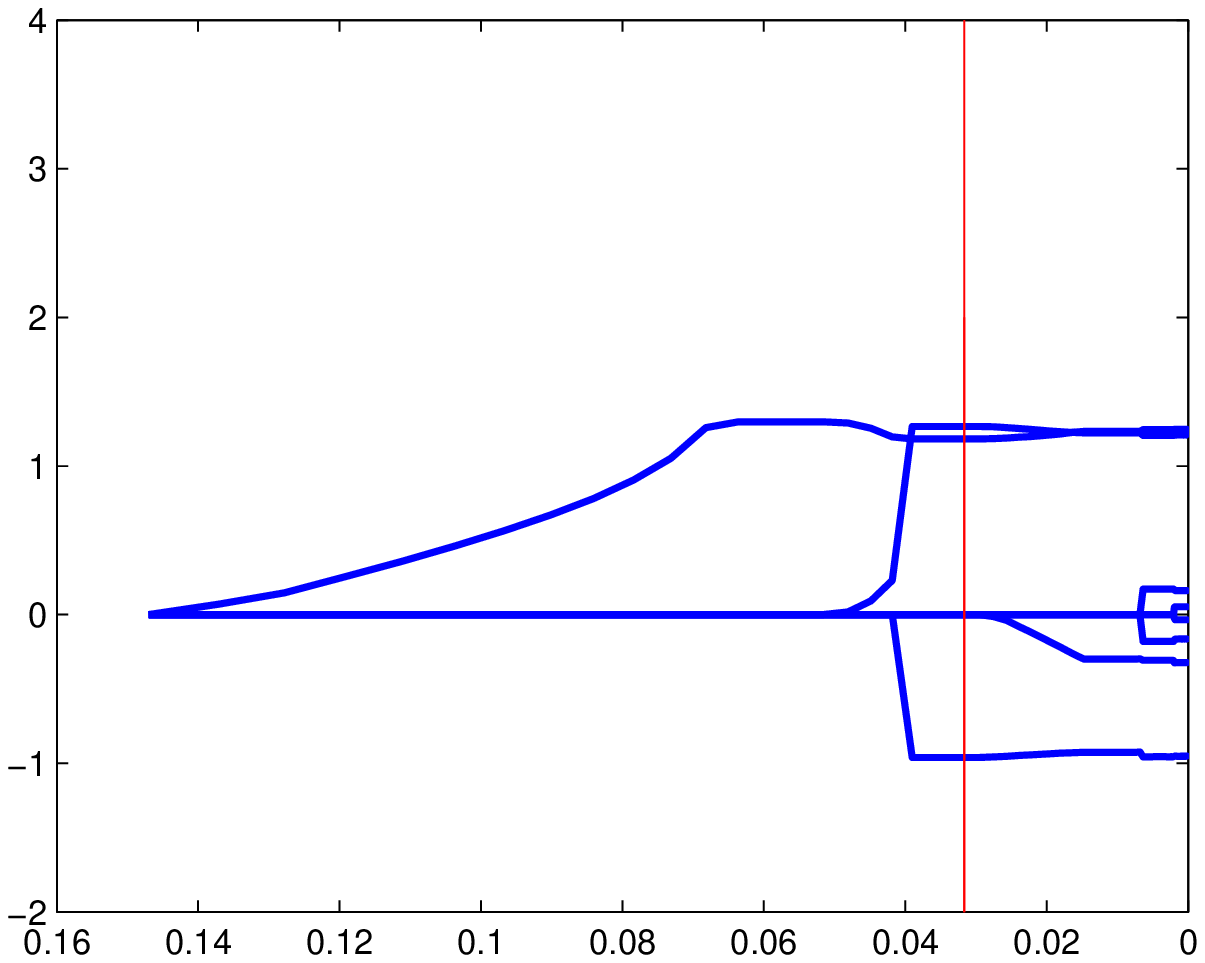}}}
\subfigure[MCP(20)]{\label{fig:subfig:a}\resizebox{2in}{2in}{\includegraphics{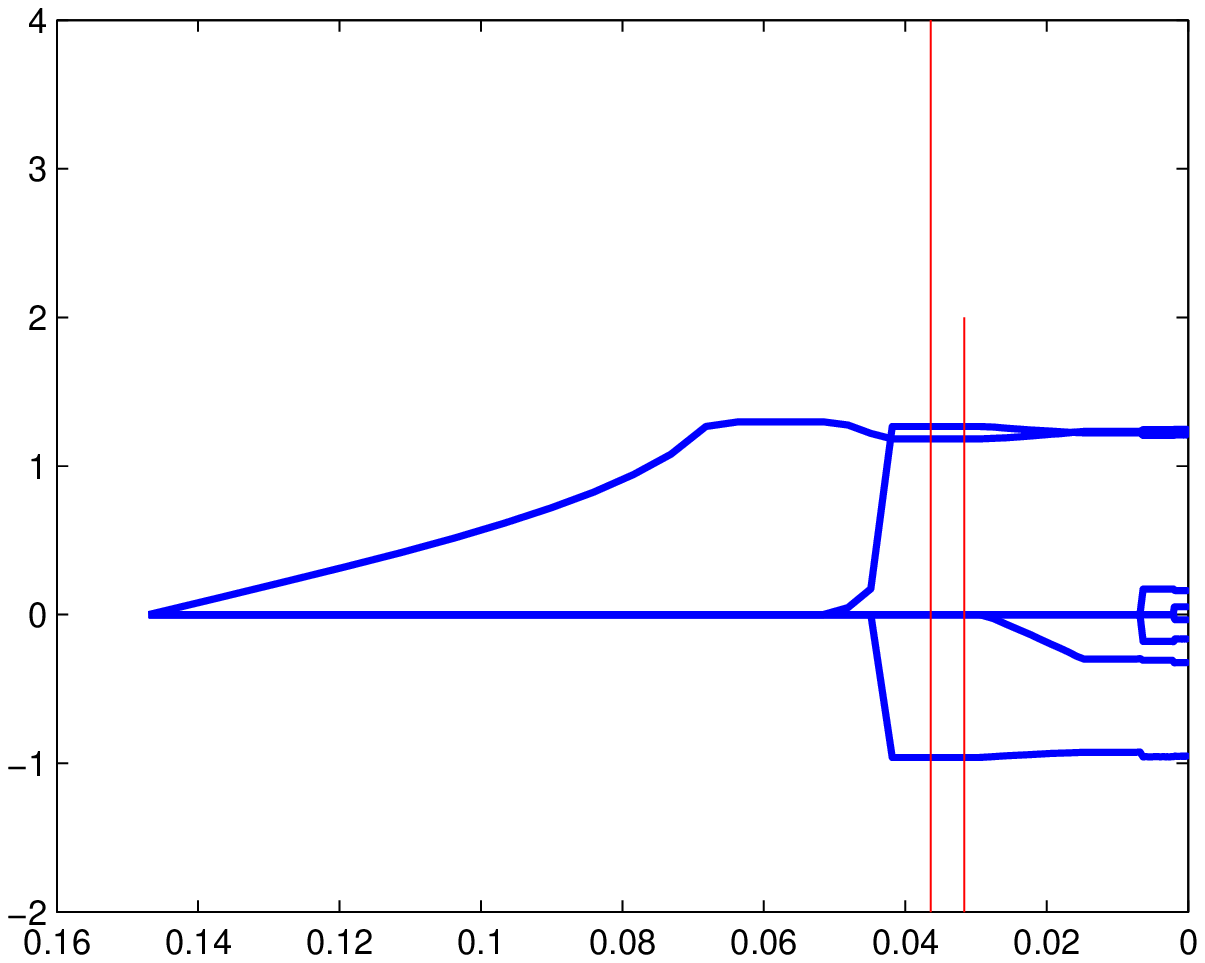}}}

\end{center}
\end{figure}

The same logistic regression model as in Section
\ref{sec:simu:logisticre} except $\alpha_0=-3$, $\boldsymbol\beta_0=(1.5,1,-0.7,0,0,0,0,0)^T$ and $\Sigma_{i,j}=0.5^{|i-j|}$  is used.  In Figure
\ref{fig:solutionpathsimu}, we compare smoothness of solution paths generated from the sigmoid
penalty, SCAD and MCP with the same concavity at 0. It can be seen that this choice will also lead to similar sparse level as in Table
\ref{tab:logisticbic}. SCAD and MCP use the same algorithm as sigmoid does
(not adaptive-scale as in \pkg{ncvreg} package). The shorter vertical line is
the BIC choice of $\lambda$ and the longer one uses a 10-fold cross
validation. To avoid variation due to the random division in the
cross-validation, the result in Table \ref{tab:logisticbic} uses BIC
to choose $\lambda$ with the other tuning parameter fixed. The subfigures
(c) and (d)  for SCAD and MCP are both less smooth than
subfigure (a) for the sigmoid penalty which is of similar smoothness as subfigure (b) for LASSO. The sigmoid penalty appears to outperform SCAD and MCP in terms of smoothness of the solution path.

\subsection{Stability}\label{sec:simu:stability} The data are generated in the same way as in the preceding subsection.
For each replication, we repeat
100 times cross-validation to select $\lambda$ and calculate its mean sample standard deviation. The box-plots for the mean sample standard deviations are generated. In addition, to evaluate the asymptotic stability as introduced in Section \ref{sec:simu:stability}, we add a small random perturbation generated from $\mathrm{N}(0,0.1)$ to
all the observations before conducting the analysis.
The box-plot results are presented in Figure \ref{fig:boxplot}.

The result without the random error term evaluates the stability regarding the randomness of cross-validation for each penalty. The result with the random error evaluates the stability towards the random perturbation of the data.
To ensure a fair
comparison, we choose the same level of concavity at 0 for SCAD, MCP and the sigmoid penalty. It is seen from the box-plots
 that LASSO is the most stable one, while the sigmoid penalty outperforms SCAD and
 MCP, in terms of both the median and 75\% quantile in the case without error, and 75\%
 quantile in the case with the error added.

\begin{figure}[t]
\caption{Box-plots for the mean standard deviations in Section \ref{sec:simu:stability}. The left panel shows the box-plot without perturbation and the right one
is that with perturbation. \label{fig:boxplot}}
\begin{center}

\resizebox{2.1in}{2.3in}{\includegraphics{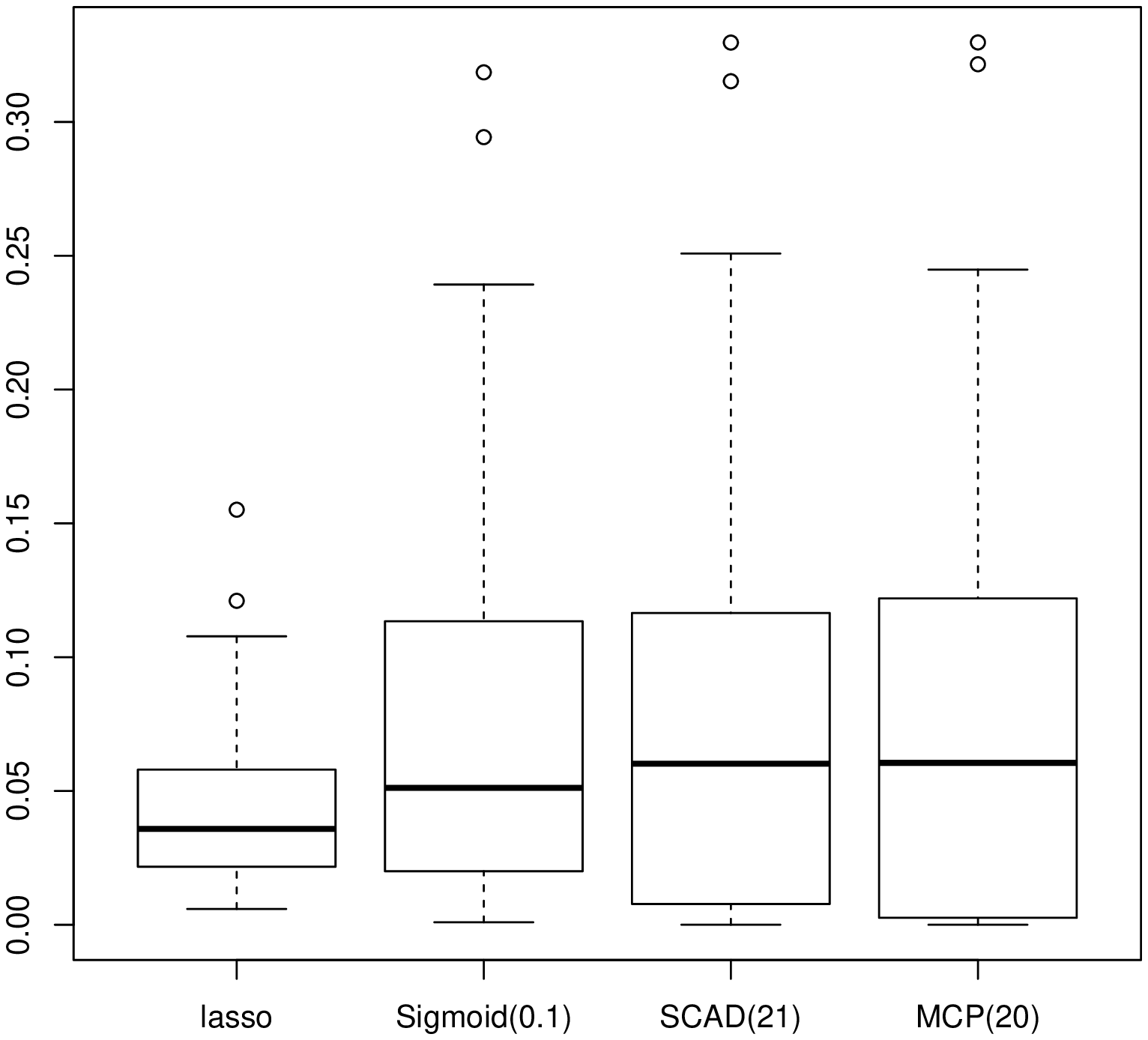}}
\resizebox{2.1in}{2.3in}{\includegraphics{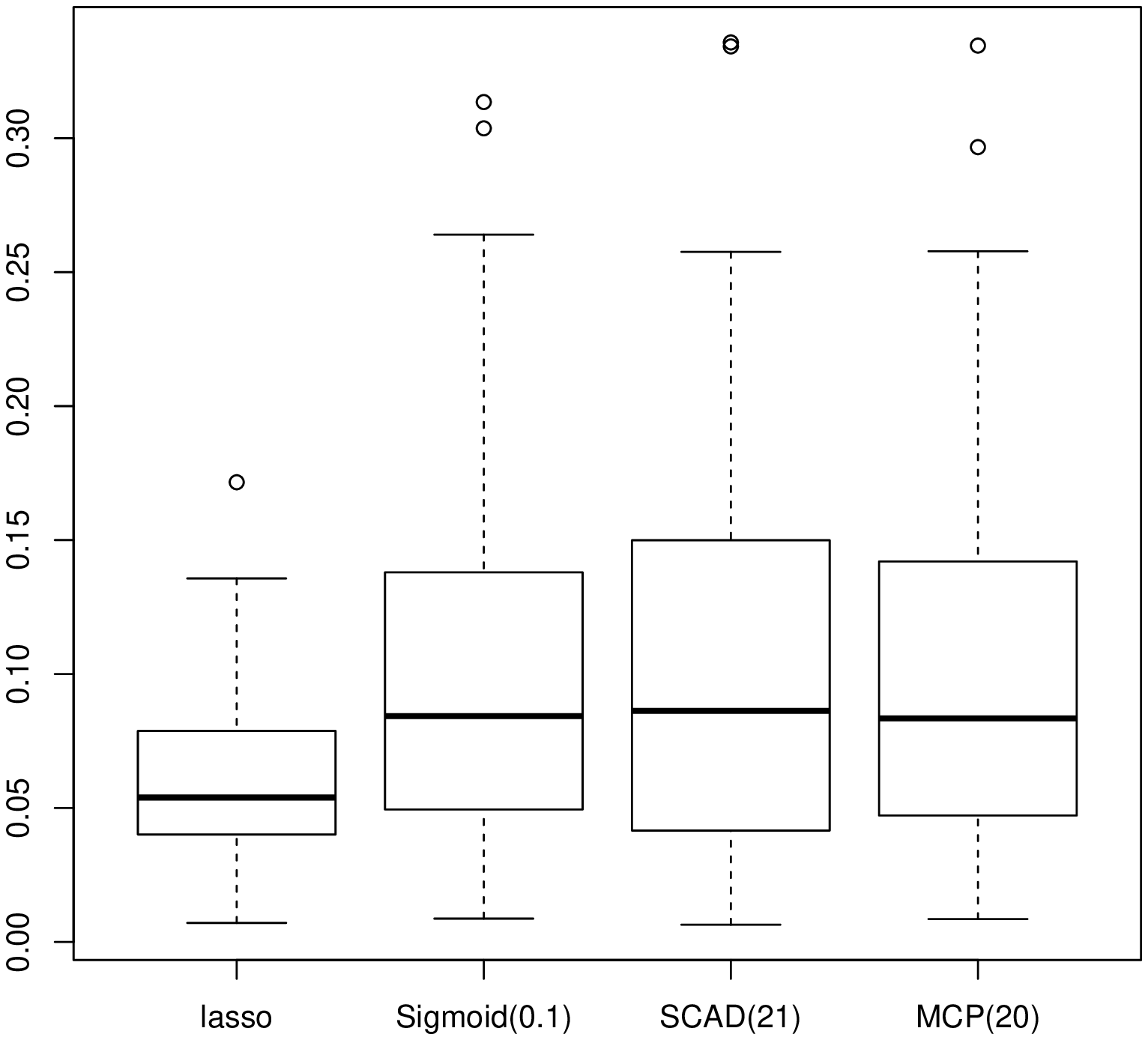}}

\end{center}
\end{figure}

\subsection{Poisson and probit} We simulate from the Poisson regression model with $n=250$, $\alpha=-1$,
 $\boldsymbol\beta=(.6,.4, 0, 0,1,0, {\bf 0}_9^T)^T$, and $\bX\sim N(\boldsymbol 0, \bSigma)$, where
$\Sigma_{i,j}=\rho + (1-\rho)1_{i= j}$ with $\rho=0.5$. For probit regression, we simulate the $\bx$ the same way and set $\alpha=-2$ and $\bbeta=(3,2,1,{\bf 0}_8^T)^T$.

The results for the Poisson and probit regression models are
summarized in Tables \ref{tab:poissonbic} and \ref{tab:probitbic},
respectively. For the Poisson regression, we also report the MRME
(median of ratios of model error of a selected model to that of
the ordinary maximum likelihood estimate under the full model). We
observe similar selection performance using Poisson, SCAD and MCP.
However, the Poisson penalty leads to the smallest MRME.  For the
probit model, all three nonconvex penalties lead to the same
result for all criteria considered.
\begin{table}\caption{Simulation results for Poisson regression under different penalties.\label{tab:poissonbic}}

\begin{center}
 \begin{tabular}{lllllllll}
  \hline
  Penalties&TP&FP&cf&of&uf&L1&L2&MRME\\\hline
LASSO&2.98 (0.01)&1.40 (0.13)&0.26&0.72&0.02&0.63&0.34&0.49\\
Poisson&2.80 (0.04)&0.41 (0.08)&0.65&0.15&0.20&0.49&0.30&0.19\\
SCAD&2.78 (0.04)&0.35 (0.07)&0.54&0.24&0.22&0.47&0.30&0.20\\
MCP&2.81 (0.04)&0.38 (0.08)&0.66&0.15&0.19&0.48&0.29&0.22\\
\end{tabular}
\end{center}
 \end{table}

\begin{table}\caption{Simulation results for probit regression under different penalties.\label{tab:probitbic}}
\begin{center}
 \begin{tabular}{lllllllll}
  \hline
  Penalties&TP&FP&cf&of&uf&L1&L2&MRME\\\hline
LASSO&3.00 (0)&2.13 (0.18)&0.21&0.79&0&2.47&1.35&0.71\\
probit&2.99 (0.01)&0.20 (0.05)&  0.82&0.17&0.01&1.37&0.85&0.27\\
SCAD&2.99 (0.01)&0.20 (0.05)&  0.82&0.17&0.01&1.37&0.85&0.27\\
MCP&2.99 (0.01)&0.20 (0.05)&  0.82&0.17&0.01&1.37&0.85&0.27\\
\end{tabular}
\end{center}

\end{table}

\subsection{Examples}
We apply the proposed LAMP to two gene expression datasets: lung cancer data (Gordon et al., 2002), and prostate cancer data (Singh et al., 2002). The two datasets are downloadable at \url{http://www.chestsurg.org}
and \url{http://www.broad.mit.edu}. The response variable in each dataset is binary.

We aim to use the lung cancer data to classify
malignant pleural mesothelioma (MPM) from adenocarcinoma (ADCA) of the
lung. The data consists of 181 tissue samples, 32 of which are
for training with remaining 149 for testing. Each
sample is described by 12533 genes.

After the initial standardization of the predictors into mean zero and variance
one,  we apply LASSO, SCAD, MCP, and the sigmoid penalty, using
\pkg{glmnet} for LASSO, \pkg{ncvreg} for SCAD and MCP. For each
method, a 10-fold cross-validation is used to select the best
$\lambda$. We repeat 10 times to make different divisions to
calculate the cross-validation error. For SCAD and MCP,  we evaluate
the performance when $a,\gamma \in [3,6]$ while for sigmoid penalty,
$\lambda_0\in (.005,0.03)$. The results are summarized in Table
\ref{tab:realdatalung}.
\begin{table}
\caption{Classifications errors and selected model sizes under different penalties for lung cancer data. \label{tab:realdatalung}}

\begin{center}
 \begin{tabular}{lllllllll}
  \hline
  Penalties & $\lambda_0/a/\gamma$ & Training error & Test error & \# of selected genes\\\hline
LASSO&\ \ ---&\ \ 0/32 &\ 7/149 & \ \ \ 15\\
SCAD&(3$\sim$6)&\ \ 0/32 &\ 6/149 &\ \ \ 13 \\
MCP&(3$\sim$6)&\ \ 0/32&\ 7/149 & \ \ \ \ 3\\
sigmoid&(.022$\sim$0.03) &\ \ 0/32 &\ 7/149 &\ \ \ \ 3\\
sigmoid&(.021)&\ \ 0/32 &\ 6/149 &\ \ \ \ 5\\
\end{tabular}
\end{center}
\end{table}
The result for sigmoid penalty is quite similar to MCP when $\lambda_0\in (.022, 0.03)$. When $\lambda_0=0.021$, the sigmoid have 6 test errors with only 5 genes selected, which is better compared with SCAD.

For the prostate cancer data, the goal is to classify prostate tumor samples from the normal samples. There are 102 patient samples for
training, and 34 patient samples for testing with 12600 genes
 in total. The result are reported in
Table \ref{tab:realdataprostate}. Here we see the test errors are similar across methods,  although the sigmoid penalty leads to the most sparse solution.

\begin{table}\caption{Classifications errors and selected model sizes under different penalties for prostate cancer data.\label{tab:realdataprostate}}
\begin{center}
 \begin{tabular}{lllllllll}
  \hline
  Penalties & $\lambda_0/a/\gamma$ & Training error & Test error & \# of selected genes\\\hline
LASSO&\ \ ---&\ \ 0/102 &\ 2/34 & \ \ \ \ 30 \\
SCAD&(20$\sim$25)&\ \ 0/102 &\ 2/34 & \ \ \ \ 26 \\
MCP&(35$\sim$50)&\ \ 0/102&\ 2/34 & \ \ \ \ 24 \\
sigmoid&(.001) &\ \ 0/102 &\ 1/34 & \ \ \ \ 26 \\
sigmoid&(.002)&\ \ 0/102 &\ 2/34 & \ \ \ \ 23 \\
\end{tabular}
\end{center}\end{table}

\section{Discussion}
Penalty based regularization methods have received much attention in recent years.  This paper proposes a family of penalty functions (LAMP) that is adaptive and specific to the shapes of the log-likelihood functions. The proposed LAMP family is different from the well-known LASSO, SCAD and MCP. It can be argued that the new approach provides a good balance between sparsity and stability, two important aspects in model selection. It is shown that the resulting penalized estimation achieves model selection consistency and strong asymptotic stability, in addition to the usual consistency and asymptotic normality.

An important issue is how to choose the three parameters imbedded in a LAMP. The ``location''
parameter $\alpha_1$ can be chosen in an obvious way for the standard generalized linear models,
 while $\lambda$, which represents the penalty level, can be chosen through standard CV or information criteria.
 For $\lambda_0$, which controls the concavity level, it is computationally intensive to use CV.
 It is desirable to develop more effective ways to select $\lambda_0$.
 It is also important to study  the stability of the solution path.

The LAMP approach can be modified to handle grouped variable selection. It will also be of interest to develop parallel results for semiparametric regression models.


%
%

\appendix
\section{General results}\label{appendix::general results}

We first state a general result about estimation consistency,
model selection consistency and asymptotic normality.

Consider the penalized log-likelihood function
$\tilde{l}(\btheta)$ as defined by \eqref{eq::lik-pen}.
Let $\alpha_0$ be the true value of $\alpha$. Recall
$\bbeta_{10}$ is the nonzero
part of
$\boldsymbol\beta_0$ and
$\bbeta_{20}={\bf 0}$. For notational simplicity,
let

Let $\bZ$, $\bZ_i$ be i.i.d. with
density $f(\cdot,\btheta)$ that satisfies the following regularity
conditions (see \cite{Fan01}):
\begin{enumerate}
  \item[(C10)] $\mathrm{E}_{\btheta}[\partial\log f(\bZ,\btheta)/\partial\theta_j]=0,$
$$\mathrm{E}_{\btheta}\left[\frac{\partial\log f(\bZ,\btheta)}{\partial\theta_j}
\frac{\partial\log f(\bZ, \btheta)}{\partial\theta_k}\right]=-
\mathrm{E}_{\btheta}[\frac{\partial^2\log f(\bZ,
\btheta)}{\partial\theta_j\partial\theta_k}],$$where
$j,k=0,1,2,\cdots,p$;
  \item[(C11)]
  $$0<\mathrm{E}\left[\frac{\partial\log f(\bZ,\btheta)}{\partial\btheta}|_{\btheta=\btheta_0}\right]^{\bigotimes 2}<\infty;$$
  \item[(C12)] There exist functions $M_{jkl}$, a neighborhood $\bTheta_1$ of $\btheta_0$ such that
   $\E_{\btheta_0}
  M_{jkl}(\bZ)<\infty$ and
  $$\left |\frac{\partial^3\log f(\bZ,\btheta)}{\partial\theta_j\partial\theta_k\partial\theta_l}\right |\leq
  M_{jkl}(\bZ)$$
  for all $\btheta\in \bTheta_1$, $j$, $k$ and $l$.
\end{enumerate}
Note that for the generalized linear models, (C10)-(C12) are satisfied under mild assumptions on covariates. 

$$p_{1,n}\triangleq\sup_{\beta\in[\zeta_1,\zeta_2],1\leq j\leq
q}|p'_{\lambda,j}(\beta)|; \;
p_{2,n}(\beta)\triangleq\inf_{q<j\leq p}p'_{\lambda,j}(\beta);\;
p_{2,n}\triangleq p_{2,n}(0)$$
$$p_{3,n}(\beta)\triangleq\sup_{x\in(0,\beta),q<j\leq
p}|p''_{\lambda,j}(x)|;\;
p_{4,n}\triangleq\sup_{\beta\in[\zeta_1,\zeta_2],1\leq j\leq
q}|p''_{\lambda,j}(\beta)|;$$
$$p_{5,n}\triangleq\inf_{\beta\in[\zeta_1,\zeta_2],1\leq j\leq
q}p''_{\lambda,j}(\beta);\; \bSigma_1=
\mathrm{diag}\{p''_{\lambda,1}(|\beta_{10}|),\cdots,
p''_{\lambda,q}(|\beta_{q0}|)\};$$
$$\bb=(0,p'_{\lambda,1}(|\beta_{10}|)\mbox{sgn}(\beta_{10}),
\cdots,p'_{\lambda,q}(|\beta_{q0}|)\mbox{sgn}(\beta_{q0}))^T.$$
Let $~\mR=
  \left(
    \begin{array}{ccc}
      \mR_{00} &\mR_{01} & \mR_{02} \\
      \mR_{10} &\mR_{11} & \mR_{12} \\
      \mR_{20} &\mR_{21} & \mR_{22} \\
    \end{array}
  \right)
  $ be the Fisher information matrix at $\btheta_0$ partitioned by
  the intercept, the nonzero and zero parts and
  $$\bSigma_2\triangleq\left(
            \begin{array}{cc}
             \mR_{00}  & \mR_{01} \\
             \mR_{10}  & \mR_{11}+\bSigma_1 \\
            \end{array}
          \right).$$
          For $i, j =1,2$, define
  $\mR^*_{ij}\triangleq\mR_{ij}-\mR_{i0}\mR_{00}^{-1}\mR_{0j}.$
Let $\|{\bf A}_{m\times n}\|_{\infty}=\max_{1\leq i\leq m}\sum_{j=1}^n|A_{ij}|$ be the $\infty$ norm of a matrix ${\bf A}$.

\begin{lemma}Suppose that both $-l(\btheta)$ and
$-\mathrm{E}l(\boldsymbol\theta)$ are strictly convex, satisfying (C10)-(C12), and that
$p_{\lambda,j},j=1,\cdots,p$ are continuous at 0. Assume there exist $
\epsilon_n^{(k)}<\zeta_1,k=1,2$ such that $n^{-\frac{1}{2}}\vee
p_{1,n}=o(\epsilon_n^{(1)})$, the k-th derivative of
$p_{\lambda,j}(\beta)$ is continuous for
$\theta\in(0,\epsilon_n^{(k)})\bigcup[\zeta_1,\zeta_2]$ and exists
at the limit $\theta=0_+$,
$$p'_{\lambda,j}(0_+),p''_{\lambda,j}(0_+)\in
\bar{\mathbb{R}}=[-\infty,+\infty].$$ $p'_{\lambda,j}(0)\triangleq
p'_{\lambda,j}(0_+), p''_{\lambda,j}(0)\triangleq
p''_{\lambda,j}(0_+).$ For $j=1,\dots, p$,
$$p_{\lambda,j}(0)=0; \;\; p_{\lambda,j}(\beta)\geq0,\; \beta>0.$$
Then, we have the following results.
\begin{enumerate}
  \item (Parameter estimation consistency) If
either of the following two conditions, (1.a) and (1.b), holds, then there exists a
consistent local maximizer $\hat{\btheta}_n.$
\begin{itemize}
  \item[(1.a)] $p_{1,n}+
  p_{4,n}\rightarrow0$ as $n\rightarrow \infty$.
  \item[(1.b)] $p_{1,n}\rightarrow0$ and $
  \varliminf_np_{5,n}>0.$
\end{itemize}
  \item (Model selection consistency)
If any of the following three  conditions, (2.a)-(2.c), holds,  then any
consistent local maximizer  is model
selection consistent.
\begin{itemize}
  \item[(2.a)] As $n\to \infty$, $\min(\sqrt{n}p_{2,n}
  ,p_{2,n}/p_{1,n})\rightarrow\infty$
 and for any ${u_n}=O(p_{1,n}+\frac{1}{\sqrt{n}})$,
  $ p_{2,n}=O(p_{2,n}(u_n))$.
  \item[(2.b)]  There exists $C>0$ such that ${p_{2,n}}/{p_{1,n}}\rightarrow C$ and $\|\mR^*_{21}\mR_{11}^{*-1}\|_{\infty}
  <C$;  $
   \sqrt{n}p_{2,n}
  \rightarrow\infty$; and
  $p_{3,n}(\epsilon_n^{(2)})+p_{4,n}\rightarrow0.$
\item[(2.c)] For any $u_n=O(p_{1,n}+1/\sqrt{n})$, $\min(\sqrt{n}p_{2,n}(u_n)
  ,{p_{2,n}(u_n)}/{p_{1,n}})\rightarrow\infty.$
\end{itemize}
  \item (Asymptotic normality)
 Assume the estimator has the model selection consistency as stated in 2. If $\sqrt{n}p_{1,n}\rightarrow0$,
 we have the asymptotic normality for the nonzero part,
\begin{eqnarray*}
   \sqrt{n}\bSigma_2
\{\left(
                                  \begin{array}{c}
                                    \hat\alpha-\alpha_{10} \\
                                    \hat{\bbeta}_1-\bbeta_{10} \\
                                  \end{array}
                                \right)
   +\bSigma_2^{-1}\bb\}
   \longrightarrow N(0,\left(
                                                      \begin{array}{cc}
                                                        \mR_{00} & \mR_{01} \\
                                                        \mR_{10} & \mR_{11} \\
                                                      \end{array}
                                                    \right)).
\end{eqnarray*}

\end{enumerate}
\label{lemma:oracle}
\end{lemma}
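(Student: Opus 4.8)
The plan is to follow the classical template for the oracle property of nonconcave penalized likelihood (cf. \cite{Fan01}), establishing the three assertions in turn and using throughout the strict convexity of $-l$ and $-\E l$ together with (C10)--(C12): (C10) gives the information identity so that the negative Hessian of $-l$ converges to the Fisher information $\mR$, (C11) makes $\mR$ positive definite on the relevant block, and (C12) supplies the dominated third derivative that controls every Taylor remainder. All expansions are taken about $\btheta_0$, and the natural consistency rate that emerges is $\gamma_n = n^{-1/2}+p_{1,n}$.

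For parameter estimation consistency I would use the ball argument: show that for every $\epsilon>0$ there is a constant $C$ with
$$\P\Big(\sup_{\|\bu\|=C}\tilde l(\btheta_0+\gamma_n\bu) < \tilde l(\btheta_0)\Big)\ge 1-\epsilon,$$
which forces a local maximizer inside $\{\|\btheta-\btheta_0\|\le C\gamma_n\}$. Expanding $\tilde l(\btheta_0+\gamma_n\bu)-\tilde l(\btheta_0)$, the likelihood score contributes a term of order $\sqrt n\,\gamma_n\|\bu\|$ and the likelihood Hessian a strictly negative quadratic of order $n\gamma_n^2\|\bu\|^2$; since $\sqrt n\,\gamma_n\ge1$ the quadratic dominates the score for $\|\bu\|=C$ large. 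The penalty on the signal block adds a first-order term bounded by $n\gamma_n p_{1,n}\|\bu\|$, dominated by the same quadratic because $\gamma_n\ge p_{1,n}$, and a second-order term of order $n\gamma_n^2 p_{4,n}\|\bu\|^2$. Under (1.a) this vanishes since $p_{4,n}\to0$; under (1.b) the penalty is convex on the signal range ($\varliminf_n p_{5,n}>0$), so the second-order contribution is itself $\le0$ and merely reinforces the negative quadratic. Either way the increment is negative on the sphere, giving a consistent local maximizer.

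For model selection consistency I fix a consistent local maximizer $\hat\btheta$, so $|\hat\beta_j|=O_p(\gamma_n)$ for $j>q$, and argue by contradiction from stationarity: if $\hat\beta_j\neq0$ then $n^{-1}\partial l(\hat\btheta)/\partial\beta_j = p'_{\lambda,j}(|\hat\beta_j|)\,\mathrm{sgn}(\hat\beta_j)$. A one-term Taylor expansion of the score about $\btheta_0$, using (C12) and consistency, bounds the left side by $O_p(n^{-1/2}+p_{1,n})$, whereas the right side is at least $p_{2,n}(u_n)$ in magnitude for $u_n=O(p_{1,n}+n^{-1/2})$. Under (2.a) or (2.c) the hypothesis $\min(\sqrt n\,p_{2,n}(u_n),p_{2,n}(u_n)/p_{1,n})\to\infty$ (with the transfer $p_{2,n}=O(p_{2,n}(u_n))$ in (2.a)) makes the right side strictly dominate, a contradiction, so $\hat\beta_j=0$ with probability tending to one.

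The delicate case, and the step I expect to be the main obstacle, is (2.b), where $p_{2,n}/p_{1,n}\to C<\infty$ and a crude magnitude comparison no longer suffices. Here I would profile out the intercept and the signal block: solving their stationarity equations to leading order gives $\hat\bbeta_1-\bbeta_{10}\approx \mR_{11}^{*-1}(\mathrm{score}_1-\bb_1)$ with the intercept-profiled blocks $\mR^*_{ij}$, and substitution into the null-block gradient yields
$$n^{-1}\frac{\partial l(\hat\btheta)}{\partial\bbeta_2}\approx \mathrm{score}_2-\mR^*_{21}\mR_{11}^{*-1}(\mathrm{score}_1-\bb_1).$$
The stochastic pieces are $O_p(n^{-1/2})=o(p_{2,n})$ because $\sqrt n\,p_{2,n}\to\infty$, and $p_{3,n}(\epsilon_n^{(2)})+p_{4,n}\to0$ validates the penalty linearization; the binding deterministic term has sup-norm at most $\|\mR^*_{21}\mR_{11}^{*-1}\|_\infty\,p_{1,n}<C\,p_{1,n}\sim p_{2,n}$, so it stays strictly below the null-coordinate slope $p_{2,n}$. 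This is an irrepresentable-type margin condition, and making the profiling and the sharp constant comparison rigorous is where the real work lies. Finally, for asymptotic normality I restrict to the signal block (the null block being exactly $\mathbf{0}$ with high probability), expand the restricted score about $\btheta_0$, and collect the penalty first- and second-order contributions into $\bb$ and $\bSigma_1$; this produces
$$\sqrt n\,\bSigma_2\{(\hat\alpha-\alpha_0,\hat\bbeta_1-\bbeta_{10})^T+\bSigma_2^{-1}\bb\}=\frac{1}{\sqrt n}\frac{\partial l(\btheta_0)}{\partial(\alpha,\bbeta_1)}+o_p(1),$$
where $\sqrt n\,p_{1,n}\to0$ forces the $n^{-1/2}$ rate and renders the residual penalty bias lower order, so the central limit theorem for the i.i.d. score delivers the stated limit $N\!\left(0,\left(\begin{smallmatrix}\mR_{00}&\mR_{01}\\ \mR_{10}&\mR_{11}\end{smallmatrix}\right)\right)$.
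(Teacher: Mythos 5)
Your proposal is correct and follows essentially the same route as the paper's proof: the ball argument at rate $n^{-1/2}+p_{1,n}$ with the quadratic information term dominating (and the sign of the second-order penalty term handled separately under (1.a) versus (1.b)), the magnitude comparison between the Taylor-expanded score and $p_{2,n}(u_n)$ for (2.a)/(2.c), an irrepresentable-type condition $\|\mR^*_{21}\mR_{11}^{*-1}\|_{\infty}<C$ obtained by eliminating the intercept and signal blocks from the stationarity system for (2.b), and the restricted score expansion plus CLT for normality. The only cosmetic difference is that the paper packages the (2.b) step through convergence of the rescaled estimator to the argmin of a limiting convex objective $h_n$ and its KKT conditions, whereas you substitute the stationarity equations directly; the resulting condition is the same.
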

Lemma \ref{lemma:oracle} covers commonly used penalties
including LASSO, SCAD, MCP, adaptive LASSO, hard thresholding, bridge
and the LAMP family. Note that the conditions are satisfied for SCAD, and are slightly  weaker than those imposed in \cite{Fan01}. It is also easy to verify that condition (2.a) is satisfied for SCAD and MCP, (2.b) for LASSO and (2.c) for bridge.


Suppose contrary to (2.a), $\varlimsup_n\sqrt{n}p_{2,n}<\infty$ and  $p_{3,n}+p_{4,n}\rightarrow0.$ Then, the resulting estimator $\hat\bbeta_n$ does not have the model selection
consistency.

\section{Proofs}
\begin{proof}[Proof of Lemma \ref{lemma::stability:weak}]

 For any $\epsilon>0,$ we have
 $\|\mE_i\|/\sqrt{n}<\epsilon$, $i=1,2$. Select
$$
\bu_{i,n}\in\arglmin_{\btheta}M_{n}(\mZ_n+\mE_i,\btheta),i=1,2.$$
Define function $f_i(\bu)=m_n(\mZ_n+\mE_i,\bu),i=1,2.$

Throughout the proof, we use the following conventions: $o(1)$ denotes a quantity approaching  0 as  $n\rightarrow\infty$ and $\epsilon\rightarrow0$
simultaneously; $o(1,n)$ for  approaching 0 as
$n\rightarrow\infty$; $o(1,\epsilon;n)$ for approaching 0 as
$\epsilon\rightarrow0$ with $n$ fixed, and $O(1,n)$ for a bounded sequence with
$n\rightarrow\infty.$ If a subscript $p$ is used, then the convergence (boundedness) is in probability.

 First, from Condition (C2),
 \begin{eqnarray}\label{lemma2proof:2}
  \nonumber
      &~&\sup_{\bu\in\bTheta}|f_2(\bu)-f_1(\bu)|  \\\nonumber
      &\leq & \frac{1}{n}\sum_{i=1}^nK(\bZ_i)\|\bepsilon_{1i}-\bepsilon_{2i}\|\\\nonumber
      &\leq &\frac{1}{\sqrt{n}}\sqrt{\frac{1}{n}\sum_{i=1}^nK^2(\bZ_i)}\|\mE_1-\mE_2\|\\\nonumber
        &=&O_p(1,n)o(1,\epsilon;n).
 \end{eqnarray}
Similarly we have
$$\sup_{\bu\in\bTheta}|m_n(\mZ_n,\bu)-f_1(\bu)|=O_p(1,n)o(1,\epsilon;n).$$

Second, from Condition (C1), the $f_i(\bu)$ are convex.
 We will show that for any $r \in[0,1]$, $\bu'_{2,n}=r\bu_{1,n}+(1-r)\bu_{2,n}$, and $i=1,2$,
 \begin{equation}\label{lemma2proof:1}
    f_i(\bu'_{2,n})\geq
    f_i(\bu_{i,n})+o_p(1,n).
\end{equation}
 Since $\bu_{1,n}$ is a
local minimum of $f_1(\bu)+r_n(\bu)$, there exists
$\sigma_{1,n}\in (0, 1)$ such that
\begin{eqnarray}
\nonumber
  &\ &f_1(\bu_{1,n})+r_n(\bu_{1,n})-r_n(\bu_{1,n}+\sigma_{1,n}(\bu'_{2,n}-\bu_{1,n})) \\\nonumber
  &\leq&
  f_1(\bu_{1,n}+\sigma_{1,n}(\bu'_{2,n}-\bu_{1,n}))
  \\\nonumber
  &=& f_1(\sigma_{1,n}\bu'_{2,n}+(1-\sigma_{1,n})\bu_{1,n}) \\\nonumber
  &\leq& \sigma_{1,n}
  f_1(\bu'_{2,n})+(1-\sigma_{1,n})f_1(\bu_{1,n}),
\end{eqnarray}
where the last inequality follows from the convexity of $f_1(\cdot)$.
Therefore,
\begin{eqnarray} \nonumber
  f_1(\bu'_{2,n})&\geq&f_1(\bu_{1,n})-\frac{r_n(\bu_{1,n}+\sigma_{1,n}(\bu'_{2,n}-\bu_{1,n}))-r_n(\bu_{1,n})}{\sigma_{1,n}
  \|\bu'_{2,n}-\bu_{1,n}\|}\|\bu'_{2,n}-\bu_{1,n}\|,
\end{eqnarray}
which, in view of Condition (C3), implies \eqref{lemma2proof:1} for $i=1$. The case of $i=2$ can be proved similarly.

Third, $\mathrm{E}m_n(\mZ_n,\bu_{2,n}')-
m_n(\mZ_n,\bu'_{2,n})=o_p(1,n)$ from the law of large number. Note
that
\begin{eqnarray}
\nonumber \bar m(\bu_{2,n}')-\bar m(\bu_{1,n})&=&
 [\mathrm{E}m_n(\mZ_n,\bu_{2,n}')-
m_n(\mZ_n,\bu'_{2,n})]\\\nonumber&+&[m_n(\mZ_n,\bu'_{2,n})-f_2(\bu'_{2,n})]
\\\nonumber
   &-& [\mathrm{E}m_n(\mZ_n,\bu_{1,n})-
m_n(\mZ_n,\bu_{1,n})]\\\nonumber&-&[m_n(\mZ_n,\bu_{1,n})
-f_1(\bu_{1,n})]\\\nonumber
   &+& [f_2(\bu'_{2,n})-f_1(\bu'_{2,n})]+[f_1(\bu'_{2,n})-f_1(\bu_{1,n})]\\\nonumber
   &\geq&o_p(1,n)+o(1,\epsilon;n)O_p(1,n)= o_p(1,\epsilon,n).\nonumber
\end{eqnarray}
In other words, for any $\delta>0$,
$$\lim\limits_{n\to\infty}\P(\varliminf\limits_{\epsilon\to 0}[\bar
m(\bu_{2,n}')-\bar m(\bu_{1,n})]<-\delta)=0.$$
Similarly, $$\lim\limits_{n\to\infty}\P(\varliminf\limits_{\epsilon\to
0}[\bar m(\bu_{2,n}')-\bar
m(\bu_{2,n})]<-\delta)=0.$$ Combine these two
together, we have, for any $\delta>0$ and $r\in [0,1]$,
\begin{equation}\label{eqn:lemma2Delta}
\lim\limits_{n\to\infty}\P(\varliminf\limits_{\epsilon\to 0}\bar
m(\bu_{2,n}')-[r\bar m(\bu_{1,n})+(1-r)\bar
m(\bu_{2,n})]<-\delta)=0.
\end{equation}
Define
$$\Delta(\bu_1,\bu_2)\triangleq \max_{0\leq r\leq1} [r\bar m(\bu_1)+(1-r)\bar
m(\bu_2)-\bar m(r\bu_1+(1-r)\bu_2)],$$ and $$
 C_{\delta}=\inf\{C\ge 0|\forall\bu_1,\bu_2\in\bTheta\mbox{~and~}\|\bu_1-\bu_2\|\ge C,\Delta(\bu_1,\bu_2)>\delta\}.$$
 Since $\bTheta$ is compact, there exists $\delta_1>0,$ such that $C_{\delta}$ exists for any
 $\delta\in(0,\delta_1]$. Since $\bar m$ is strictly convex,
 $C_{\delta}\rightarrow0,$ as $\delta\rightarrow0,$ and $C_{\delta}>0.$ We thus conclude
 from (\ref{eqn:lemma2Delta}) that for any $\delta>0$
$$\lim\limits_{n\to\infty}\P(\varlimsup\limits_{\epsilon\to
0}\Delta(\bu_{1,n},\bu_{2,n})>\delta)=0.$$
By the definition of $C_{\delta}$, for $\delta'>\delta$,
$$\{\varlimsup\limits_{\epsilon\to
0}\|\bu_{1,n}-\bu_{2,n}\|>C_{\delta'}\}\subseteq\{\varlimsup\limits_{\epsilon\to
0}\Delta(\bu_{1,n},\bu_{2,n})>\delta\}.$$ Therefore, for any $\delta'>\delta>0$,
$$\lim\limits_{n\to\infty}\P(\varlimsup\limits_{\epsilon\to 0}\|\bu_{1,n}-\bu_{2,n}\|>C_{\delta'})=0.$$
Since $\delta>0$ implies $C_{\delta}>0$ and $\delta\rightarrow0$
implies $C_{\delta}\rightarrow0$, $C_\delta'$ can be arbitrarily small. Thus, for any $\eta>0$,
$$\lim\limits_{n\to\infty}\P(\varlimsup\limits_{\epsilon\to 0}\|\bu_{1,n}-\bu_{2,n}\|>\eta)=0.$$

%
%
\end{proof}

\begin{proof}[Proof of Lemma \ref{lemma:stability:strong}]
From Condition (C2) and similar to the proof of Lemma \ref{lemma::stability:weak},
\begin{eqnarray*}
  &~&\sup_{\stackrel{r\in[0,1]}{\btheta_i\in\bTheta,i=1,2}}\large\{[r
M_n(\mZ_n+\mE_n,\btheta_1)+(1-r)
M_n(\mZ_n+\mE_n,\btheta_2) \\
   &-& M_n(\mZ_n+\mE_n,r\btheta_1+(1-r)\btheta_2)]\\&-& [r
M_n(\mZ_n,\btheta_1)+(1-r)
M_n(\mZ_n,\btheta_2) \\
   &-& M_n(\mZ_n,r\btheta_1+(1-r)\btheta_2)]\large\}\\
   &=&O_p(1,n)o(1,\epsilon;n).
\end{eqnarray*}
In view of this and Condition (C5), there exists $\epsilon_0>0,$ such that if
$\|\mE_n\|\le\sqrt{n}\epsilon_0,$ then
\begin{equation}\label{eqn:lemmastrongstability:strictconvex}
\lim_{n\to\infty} P\left(M_n(\mZ_n+\mE_n,\btheta) \mbox{~is strictly
convex within~} U(\btheta_0;\delta_0)\cap\bTheta\right)=1.
\end{equation}

On the other hand, (C4) and weak asymptotic stability imply
$$\varlimsup_{\substack{n\rightarrow\infty}}\mathrm{P}(\varliminf_{\epsilon\rightarrow0}
\mathrm{diam}\bigcup_{\substack{\|\mE\|<\sqrt{n}\epsilon\\
\mZ_n+\mE_n\in \SZ}}\{\arglmin
M_n(\mZ_n+\mE_n,\btheta)\}>\delta_0/2)=0,$$
$$\varlimsup_{\substack{n\rightarrow\infty}}\mathrm{P}(\varliminf_{\epsilon\rightarrow0}
d(\btheta_0,\bigcup_{\substack{\|\mE\|<\sqrt{n}\epsilon\\
\mZ_n+\mE_n\in \SZ}}\{\arglmin
M_n(\mZ_n+\mE_n,\btheta)\})>\delta_0/2)=0.$$ Thus,
\begin{align}\label{eq::proof-lemma-2}
\varliminf_{\substack{n\rightarrow\infty}}\mathrm{P}(\varlimsup_{\epsilon\rightarrow0}
\bigcup_{\substack{\|\mE_n\|<\sqrt{n}\epsilon\\
\mZ_n+\mE_n\in \SZ}}\{\arglmin M_n(\mZ_n+\mE_n,\btheta)\}\subset
U(\btheta_0;\delta_0))=1.
\end{align}

Denote by $\tilde\btheta$ and $\btheta^*$ the minimizers of
$M_n(\mZ_n+\mE_n,\btheta)$ and $M_n(\mZ_n,\btheta)$ over $\btheta\in
U(\btheta_0;\delta_0)\cap\bTheta$, respectively. We have
\begin{eqnarray*}
  0&\geq& M_n(\mZ_n,\btheta^*)-M_n(\mZ_n,\tilde\btheta)  \\
   &=& M_n(\mZ_n+\mE_n,\btheta^*)-M_n(\mZ_n+\mE_n,\tilde\btheta)+O_p(1)o(1,\epsilon;n) \\
   &\geq& O_p(1)o(1,\epsilon;n).
\end{eqnarray*}
Thus
$$\lim_{\epsilon\rightarrow0}\|M_n(\mZ_n,\btheta^*)-M_n(\mZ_n,\tilde\btheta)\|=0,$$
which, along with strict convexity of $M_n(\mZ_n+\mE_n,\btheta)$
as in (\ref{eqn:lemmastrongstability:strictconvex}), implies
\begin{align}\label{eq::thetastartheta}
\lim_{\epsilon\rightarrow0}\|\btheta^*-\tilde\btheta\|=0.
\end{align}
Combining \eqref{eq::proof-lemma-2} and \eqref{eq::thetastartheta}, we have
%
$$\varliminf_{\substack{n\rightarrow\infty}}\mathrm{P}(\varlimsup_{\epsilon\rightarrow0}\mathrm{diam}
\bigcup_{\substack{\|\mE_n\|<\sqrt{n}\epsilon\\
\mZ_n+\mE_n\in \SZ}}\{\arglmin M_n(\mZ_n+\mE_n,\btheta)\}=0)=1.$$

\end{proof}

\begin{proof}[Proof of Lemma \ref{lemma:rate}]

First, it is obvious that $\lambda\rightarrow0$ and
$\sqrt{n}\lambda\rightarrow+\infty.$ Since $\zeta_1$ is fixed, it suffices to show
\begin{align}\label{eq::lem3-eq1}
\sqrt{n}\lambda
\exp[{(\frac{\lambda_0}{\lambda}\zeta_1-\alpha_1)^uL(\frac{\lambda_0}{\lambda}\zeta_1-\alpha_1)}]\rightarrow0.
\end{align}

From the monotonicity of $\log(\cdot)$ in $(0,\infty)$, \eqref{eq::lem3-eq1} is equivalent to
\begin{align}\label{eq::lem3-eq2}
\log(\sqrt{n}\lambda)+(\frac{\lambda_0}{\lambda}\zeta_1-\alpha_1)^uL(\frac{\lambda_0}{\lambda}\zeta_1-\alpha_1)\rightarrow-\infty.
\end{align}
Since $\log(\sqrt{n}\lambda)\to \infty$ and $L(\cdot)<0$ by assumption, we only need to show
\begin{align}\label{eq::lem3-eq3}
\log(\sqrt{n}\lambda)\ll (\frac{\lambda_0}{\lambda})^uL(\frac{\lambda_0}{\lambda}).
\end{align}

Since $L(\cdot)$ is a slowly-varing function, we have $L(\cdot)=O(1)$ or $L(\cdot)\to -\infty$. Thus, a sufficient condition will be $\log(\sqrt{n}\lambda)\ll (\frac{\lambda_0}{\lambda})^u$.
Then taking $1/u$-th power on both slides, we have  $\lambda[\log(\sqrt{n}\lambda)]^{1/u}\ll \lambda_0$, which can be guaranteed by the assumption $\lambda(\log n)^{1/u}\ll \lambda_0$.

%

\end{proof}

\begin{proof}[Proof of Lemma \ref{lemma:oracle}]
Part 1 (Parameter estimation consistency). \\
It suffices to show that for any $\epsilon>0,$ there exists
$C>0,$ such that
\begin{equation}\label{eqn:para-consistency}
\varliminf_{\substack{n\rightarrow\infty}}\P[\sup_{\|\bu\|=C}\tilde{l}(\btheta_0+
\bu(n^{-\frac{1}{2}}+p_{1,n}))-\tilde{l}(\btheta_0)<0)]>1-\epsilon.
\end{equation}
This is because, under (\ref{eqn:para-consistency}), there exists
a local maximizer
$\hat{\btheta}=(\hat{\theta}_0,\cdots,\hat{\theta}_p),$ such that
$\|\hat{\btheta}-\btheta_0\|= O_p(\alpha_n)$, where
$\alpha_n=n^{-\frac{1}{2}}+p_{1,n}$. Therefore, consistency holds
as $p_{1,n}\rightarrow 0$.

To show (\ref{eqn:para-consistency}), we apply the Taylor expansion
to get, in view of the fact that $p_{\lambda ,j}(|\beta_{j,0}
+(n^{-\frac{1}{2}}+p_{1,n})u_j|)\ge 0=p_{\lambda ,j}(|\beta_{j,0}|)$
for all $j>q$,
\begin{eqnarray}\label{eq::proof-lemma4}
   &~& \tilde{l}(\btheta_0+(n^{-\frac{1}{2}}+p_{1,n})\bu)-\tilde{l}(\btheta_0)
   \\\nonumber
   &\le&(n^{-\frac{1}{2}}+p_{1,n})\bu^T l'(\btheta_0)-\frac{1}{2}n\bu^T\mR
   \bu(n^{-\frac{1}{2}}+p_{1,n})^2(1+o_p(1))\nonumber\\
   &&-n\sum_{j=1}^q
p'_{\lambda,j}(|\beta_{j,0}|)\mbox{sgn}(\beta_{j,0})(n^{-\frac{1}{2}}+p_{1,n})u_j\nonumber\\
&&+n\sum_{j=1}^q\frac{1}{2}p''_{\lambda,j}(r_j)(n^{-\frac{1}{2}}+p_{1,n})^2u_j^2,\nonumber
\end{eqnarray}
where $|\beta_{j,0}|\leq
r_j\leq|\beta_{j,0}|+(n^{-\frac{1}{2}}+p_{1,n})C.$

We now deal with the four terms on right-hand side of
\eqref{eq::proof-lemma4}. From Condition (1.a), the first term is
of order $1+n^{\frac{1}{2}}p_{1,n}$ and the second term is
of order $(1+n^{\frac{1}{2}}p_{1,n})^2$. By the Cauchy
inequality,
$$n\sum_{j=1}^q
p'_{\lambda,j}(|\beta_{j,0}|)\mbox{sgn}(\beta_{j,0})(n^{-\frac{1}{2}}+p_{1,n})u_j
\leq n\sqrt{q}p_{1,n}(n^{-\frac{1}{2}}+p_{1,n})\|u\|.$$ Since
$q$ is fixed, it is also of the same or a smaller order compared to the
second term. As $p_{4,n}\rightarrow0$, $p''_{\lambda,j}(\cdot)$ in
the fourth term vanishes. Thus the fourth term is also controlled by the second term.
Regarding the constant involving $\bu$, the second term contains
$\|\bu\|^2$, while both the first and third terms contain $\|\bu\|$. Consequently, the
whole expression is controlled by its second term as long as we
choose a sufficiently large $C$, noting that $\mR$ is positive definite.

If Condition (1.b) holds, the fourth term is negative and of the same
sign as the second term. Therefore, our conclusion still holds.

\textmd{ Part 2 (Model Selection Consistency)}.\\
First of all, from any of Condition (2.a), (2.b), and (2.c), we have for $n$ sufficiently large, $p_{2,n}>0$.
Assume we have an $\alpha_n$-consistent local minimizer
$\hat{\btheta}_n$. If the model selection consistency does not hold, then
there exists a $j\in \{q+1,q+2,\cdots,p\}$ such that
$\hat{\beta}_j\neq0$.  This will result in contradiction if we can show
that there exist $\epsilon_n\ll\alpha_n$ and a
neighborhood of $\hat{\beta}_j$, $U(\hat{\beta}_j; \epsilon_n)$, within which  the sign
of $\frac{\partial \tilde{l}(\bbeta)}{\partial\beta_j}$ does not
change. Since $\hat{\beta_j}$ is a solution, the sign of left
derivative  should be different from the sign of
right derivative at this value. So the non-zero
$\hat{\beta_j}$ does not exist. Taking the Taylor expansion, we get
\begin{equation}\label{sp}
    \frac{\partial\tilde{l}(\btheta)}{\partial\beta_j}=\frac{\partial l(\btheta_0)}{\partial\beta_j}+
\sum_{l=1}^p\frac{\partial^2l(\bbeta_0)}{\partial\beta_l\partial\beta_j}(\beta_l-\beta_{l,0})(1+o_p(1))-
np'_{\lambda,j}(|\beta_j|)\mbox{sgn}(\beta_j),
\end{equation}
for $j=q+1,q+2,\cdots,p.$ We see that the third term on the right-hand side of \eqref{sp}
does not depend on $\bbeta_0$ and is only related to the sign
of $\beta_j$, which remains constant in $U(\hat{\beta_j};\epsilon_n)$ since $\hat\beta_j\neq 0$.
So if the first two terms are controlled by the third
one, we can derive the sparsity using the method above. The
coefficient of the sign function in the third term should be
positive to control the direction of the derivative. That is,
$\inf_{q<j\leq p}p'_{\lambda,j}(0)>0.$

Under Condition (2.a), the orders of the three terms in \eqref{sp} are
$\sqrt{n},\sqrt{n}+np_{1,n},$ $p_{2,n}(u_n),$ where $u_n$ is the
sequence given in Condition (2.a).  Condition (2.a) guarantees that
the first two terms  are controlled by the third one. Likewise,
Condition (2.c) leads to the same conclusion.

Define
\begin{align}\label{eq::lem4:h_n}
h_n(\bv)\triangleq \left\{ \frac{1}{2}\bv\mR
\bv^T+\sum_{j=1}^q\mbox{sgn}(\beta_{j,0})\frac{p'_{\lambda,j}(|\beta_{j,0}|)}{p_{1,n}}
v_j+\sum_{j=q+1}^p\frac{p'_{\lambda,j}(0)}{p_{1,n}}|v_j|\right\},
\end{align}
and
\begin{align}\label{eq::lem4:v_n}
\bv_n=\frac{\hat{\btheta}-\btheta_0}{p_{1,n}}=\arg\max_{\bv}
\tilde{l} (\btheta_0+p_{1,n}\bv).
\end{align}

 Since
$p_{3,n}(\epsilon_n^{(2)})\rightarrow0$, we have $\sup_{q<j\leq
p}|p'_{\lambda,j}(0)|$ $<\infty$ and, for any $u_n\to 0$,
\begin{equation}\label{eqn::conditionsparsity2}
p_{2,n}(|u_n|)=p_{2,n}+o(1).
\end{equation}
The above results, together with
$p_{4,n}\rightarrow0,$ imply that
\begin{eqnarray}
\nonumber
  &~&\tilde{l}(\btheta_0+p_{1,n}\bv)=\tilde{l}(\btheta_0)+p_{1,n}\bv^T\frac{\partial
l(\btheta_0)}{\partial\btheta}+\frac{1}{2}p^2_{1,n}\bv^T\frac{\partial^2
l(\btheta_0)}{\partial\btheta\partial\btheta^T}
\bv(1+o_p(1))\\\nonumber
  &-&np_{1,n}^2(1+o(1))\sum_{j=1}^p\Big[\mbox{sgn}(\beta_{j,0})v_j\frac{p'_{\lambda,j}(|\beta_{j,0}|)}{p_{1,n}}I_{\beta_{j,0}\neq0}+
  \frac{p'_{\lambda,j}(|\beta_{j,0}|)}{p_{1,n}}|v_j|
I_{\beta_{j,0}=0}\Big]\\\nonumber
&=&\tilde{l}(\btheta_0)+np_{1,n}^2\Big[-\frac{1}{2}\bv\mR
\bv^T-\sum_{j=1}^q\mbox{sgn}(\beta_{j,0})\frac{p'_{\lambda,j}(|\beta_{j,0}|)}{p_{1,n}}
v_j\\\nonumber
&&-\sum_{j=q+1}^p\frac{p'_{\lambda,j}(0)}{p_{1,n}}|v_j|\Big](1+o_p(1)).\nonumber
\end{eqnarray}

As $\bv_n$ is an maximizer given by \eqref{eq::lem4:v_n}, there exists $\alpha_n\rightarrow0,$
such that for any given $\epsilon>0,$ there exists a constant $C>0$ such that
$$\P(\sup_{\|\bv\|\le C}\tilde
l(\btheta_0+p_{1,n}(\bv_n+\alpha_n\bv))-\tilde
l(\btheta_0+p_{1,n}(\bv_n))\le 0)>1-\epsilon.$$ From
$$h_n(\bv_n+\alpha_n\bv)-h_n(v_n)=-(np_{1,n}^2)^{-1}[\tilde
l(\btheta_0+p_{1,n}(\bv_n+\alpha_n\bv))-\tilde
l(\btheta_0+p_{1,n}(\bv_n))](1+o_p(1)),$$
 we get
$$\P(\sup_{\|\bv\|\le
C}h_n(\bv_n+\alpha_n\bv)-h_n(v_n)\ge0)>1-\epsilon.$$
Thus we have
\begin{align}\label{eq::lem4:limit}
\frac{\hat{\btheta}-\btheta_0}{p_{1,n}}-\arg\min_{\bv}h_n(\bv)\xrightarrow{~~p~~}0.
\end{align}

Now, we investigate the minimizer of $h_n(\bv)$.
Let $\bv_n\triangleq(v_{0,n},\bv_{1,n}^T,\bv_{2,n}^T)^T$ where
$\bv_{1,n}^T$ is a $q\times1$ vector. We see
$\bv_{2,n}\rightarrow\bv_2=0$ in probability is a necessary
condition for sparsity. From conclusions above,
\begin{eqnarray}\label{formula:solutionv1v2}
&~&(v_{0,n},\bv_{1,n}^T,\bv_{2,n}^T)-\arg\min_{v_0,\bv_1,\bv_2}[\frac{1}{2}(v^2_0\mR_{00}+2v_0\mR_{01}\bv_1+2v_0\mR_{02}\bv_2\\\nonumber
&+&\bv_1^T\mR_{11}\bv_1+2\bv_1^T\mR_{12}\bv_2+\bv_2^T\mR_{22}\bv_2)+\bv^T_1\bLambda_1\ba+|\bv^T_2|\bLambda_2\boldsymbol1]\xrightarrow{~p~}0,
\end{eqnarray}
where
$$\bLambda_1=p_{1,n}^{-1}\mbox{diag}\{p'_{\lambda,1}(|\beta_{10}|),
p'_{\lambda,2}(|\beta_{20}|),\cdots,p'_{\lambda,q}(|\beta_{q0}|)\},
$$ $$\bLambda_2=p_{1,n}^{-1}\mbox{diag}\{p'_{\lambda,q+1}(0),
p'_{\lambda,q+2}(0),\cdots,p'_{\lambda,p}(0)\},$$
$\ba=\mathrm{sgn}(\bbeta_{1,0})$ and $|\bz|\leq1,$

KKT conditions lead to the following equations.
\begin{eqnarray}\label{eq::proof-lemma-4:KKT}
 \mR_{00}v_0+\mR_{01}\bv_1+\mR_{02}\bv_2&=&0,\\\nonumber
 \mR_{10}v_0+\mR_{11}\bv_1+\mR_{12}\bv_2+\bLambda_1\ba &=&0,\\\nonumber
 \mR_{20}v_0+\mR_{21}\bv_1+\mR_{22}\bv_2+\bLambda_2\bz &=&0,\nonumber
\end{eqnarray}
or
\begin{eqnarray}
 \nonumber
 \mR^*_{11}\bv_1+\mR^*_{12}\bv_2+\bLambda_1\ba &=&0,\\\nonumber
 \mR^*_{21}\bv_1+\mR^*_{22}\bv_2+\bLambda_2\bz &=&0,\nonumber
\end{eqnarray}
 where
 $\bz=(z_{1},\cdots,z_{p-q})^T$ and $$z_{j}\in\{z|\mbox{if
}v_{q+j}\neq0,z=\mathrm{sgn}(v_{q+j});\mbox{ else }|z|\leq1\}.$$
 Therefore
\begin{eqnarray}
\nonumber
  \bv_2=0&\Longleftrightarrow&\mR^*_{21}\mR_{11}^{*-1}\bLambda_1\ba+\bLambda_2\bz=0,|\bz|\leq\textbf{1} \\\nonumber
  &\Longleftrightarrow&
|\bLambda_2^{-1}\mR^*_{21}\mR_{11}^{*-1}\bLambda_1\ba|\leq\mbox{\textbf{1}},
\end{eqnarray}
which is a necessary condition for sparsity. Now let
$$\bLambda_2(\bv_2)\triangleq p_{1,n}^{-1}\mbox{diag} \{ p'_{\lambda,q+1}(v_{q+1}),
p'_{\lambda,q+2}(v_{q+2}),\cdots,p'_{\lambda,p}(v_{p})\}.$$ Then
\begin{eqnarray}\label{eq:lem4:2.b}
   \frac{\partial\tilde{l}(\btheta)}{\partial\bbeta_{2}}
&=&\frac{\partial\tilde{l}(\btheta_0)}{\partial\bbeta_{2}}+np_{1,n}[(\mR_{20},\mR_{21},\mR_{22})(v_0,\bv_1^T,\bv_2^T)^T
  \\ \nonumber
   &~&(1+o_p(1))-\bLambda_2(p_{1,n}\bv_2)\mbox{sgn}(\bbeta_{2})].
\end{eqnarray}
 From
(\ref{eqn::conditionsparsity2}),
$\bLambda_2(p_{1,n}\bv_2)=\bLambda_2+o(1)$. So with
$\sqrt{n}p_{1,n}\rightarrow\infty$ a sufficient condition for
sparsity is that
$(\mR_{20},\mR_{21},\mR_{22})(v_0,\bv_1^T,\bv_2^T)^T$ is controlled
by $\bLambda_2$ coordinatewisely. That is,
$$\mR_{20}v_0+\mR_{21}\bv_1+\mR_{22}\bv_2+\bLambda_2\bz=0,|\bz|<1,$$
which is similar to the last equation of
(\ref{eq::proof-lemma-4:KKT}) and equivalent to
$\mR^*_{21}\mR_{11}^{*-1}\bLambda_1\ba+\bLambda_2\bz=0,|\bz|<1$,
which is implied by $|\bLambda_2^{-1}\mR^*_{21}\mR_{11}^{*-1}\bLambda_1\ba|<\mbox{\textbf{1}}$. Furthermore, it can be guaranteed by
\begin{align}\label{eq::bound}
\|\mR_{21}^*\mR_{11}^{*-1}\|_{\infty}<(\|\bLambda_2^{-1}\|_{\infty}\|\bLambda_1\|_{\infty})^{-1}\rightarrow
C.
\end{align}
From \eqref{eq:lem4:2.b}, it follows that the first two terms are controlled by the third one.
Part 3 (Asymptotic normality and oracle property). \\
Suppose
$\hat{\btheta}=(\hat\alpha,\hat{\bbeta}_1^T,\boldsymbol0_{p-q}^T)^T$
is the local minimizer, noting that $\hat{\bbeta}_2=\boldsymbol0_{p-q}$ with
probability tending to 1. Using parameter estimation consistency and
model selection consistency property, for $j=0,1,2,\cdots,q,\ $
\begin{eqnarray}
 \nonumber
  0 &=& \frac{\partial
\tilde{l}(\hat{\btheta})}{\partial\theta_j}=\frac{\partial
l(\hat{\btheta})}{\partial\theta_j}-np'_{\lambda,j}(|\hat{\beta}_j|)\mbox{sgn}(|\hat{\beta}_j|)I_{j\neq0}
\\\nonumber
   &=& \frac{\partial
l(\btheta_0)}{\partial\theta_j}+\sum_{i=0}^q(\frac{\partial^2l(\btheta_0)}{\partial\theta_j\partial\theta_i})
(\theta_i-\theta_{i,0})(1+o_p(1))\\\nonumber
&-&I_{j\neq0}[np'_{\lambda,j}(|\beta_{j,0}|)\mbox{sgn}(\beta_{j,0})+n
p''_{\lambda,j}(|\beta_j|)(\hat{\beta}_j-\beta_{j,0})(1+o_p(1))].
 \nonumber
\end{eqnarray}
Thus
\begin{eqnarray*}
   \sqrt{n}\bSigma_2
\left[\left(
                                  \begin{array}{c}
                                    \hat\alpha-\alpha_{10} \\
                                    \hat{\bbeta}_1-\bbeta_{10} \\
                                  \end{array}
                                \right)
   +\bSigma_2^{-1}\bb\right]
   \longrightarrow N\left[0,\left(
                                                      \begin{array}{cc}
                                                        \mR_{00} & \mR_{01} \\
                                                        \mR_{10} & \mR_{11} \\
                                                      \end{array}
                                                    \right)\right].
\end{eqnarray*}
If the order of $\bb$ is controlled by $1/\sqrt{n}$, the oracle
property holds.

\end{proof}

\begin{proof}[Proof of Theorem \ref{theo:LAMPasymp}]
Using the notations of Lemma \ref{lemma:oracle} and the form of the
LAMP penalty, $p_{1,n}=\lambda
g'(\alpha_1-\frac{\lambda_0}{\lambda}\zeta_{1})/g'(\alpha_1);p_{2,n}
=\lambda;p_{2,n}(u_n) =\lambda
g'(\alpha_1-\frac{\lambda_0}{\lambda}u_n)/g'(\alpha_1);
p_{3,n}=\lambda_0g''(\alpha_1)/|g'(\alpha_1)|;$
$p_{4,n}=\lambda_0g''(\alpha_1-\frac{\lambda_0}{\lambda}\zeta_1)/|g'(\alpha_1)|.$

From (C6) and the Cauchy inequality,
$$\mathrm{E}|Y\bX^{T}\btheta-g(\bX^{T}\btheta)|<\infty, \forall
\btheta\in \bTheta.$$ Thus, by law of large numbers,
$m_n(\btheta)=-l_n(\btheta)\rightarrow \bar m(\btheta) $ in
probability. Recall
$$\lambda_{\min}(\E \bX g''(\bX^{T}\btheta)\bX^{T})>0, \forall
\btheta\in \bTheta,$$
from (C6),
$-\mathrm{E}l(\btheta)$ is a strictly convex
function of $\btheta.$ (C1) holds.

 From smoothness of the function $g$ and
compactness of $\bTheta$, (C2) holds.

 From (C7), (C8),
$\lambda_0/\lambda\rightarrow \infty,$ together with
$\lim_{\xi\rightarrow\infty}g'(\xi)=0,\alpha_1$ is a constant and
$\lambda\rightarrow 0,$ we have
$$\sup_{x\ge0}p'_{\lambda}(x)=\sup_{x\ge0}\lambda\frac{g'(\alpha_1-\lambda_0/\lambda x)}{g'(\alpha_1)}\rightarrow0.$$
(C3) holds, and $p_{1,n}\rightarrow0,p_{4,n}\rightarrow0.$

 From (C6)
\begin{align}
\nonumber E\left[\|\bX\|_2^2g''(\bX^T\btheta_0)+\|
\bX\|_{1}^3\sup_{\|\btheta-\btheta_0\|\leq \delta} g'''(
\bX^T\btheta)\right]<\infty,
\end{align}
which leads to (C10)-(C12) in Lemma \ref{lemma:oracle}. From $p_{1,n}\rightarrow0, p_{4,n}\rightarrow0,$ Condition (1.a) in
Lemma \ref{lemma:oracle} holds. (C4) holds.

Given $u_n=O(p_{1,n}+1/\sqrt{n})$, from
$\sqrt{n}p_{1,n}\rightarrow0,$ we get $u_n=O(1/\sqrt{n}).$ From
$\sqrt{n}\lambda\rightarrow\infty,$ we get
$u_n/\lambda\rightarrow0$. Together with the smoothness of the
function $g$, $p_{2,n}(u_n)/p_{2,n}\rightarrow1.$ From Condition
(C8), it is obvious that
$$p_{2,n}>0,\frac{p_{2,n}}{p_{1,n}}\rightarrow\infty,\sqrt{n}p_{2,n}\rightarrow\infty.$$
Condition (2.a) and conditions for asymptotic normality in Lemma
\ref{lemma:oracle} hold. Lastly, (C5) is implied by (C9).

In conclusion, with (C6)$\sim$(C9), the penalized
maximum likelihood estimator based on the LAMP family is consistent
and asymptotically normal, and achieves model selection consistency
and strong asymptotic stability.
\end{proof}
\begin{proof}[Proof of Theorem \ref{theo:algorithmconverge}]
The idea of the proof is adapted from \cite{IRLS06} and \cite{onestep}. For
convenience, we define the following notations.
\begin{eqnarray}
 \nonumber
 M_n(\btheta)&\triangleq &M_{(1)}(\btheta)+M_{(2)}(\btheta),\\\nonumber
M_{(1)}(\btheta)&\triangleq &-\frac{1}{n}l(\btheta),
M_{(2)}(\btheta)\triangleq
\sum_{j=1}^pp_{\lambda}(|\theta_j|),\\\nonumber
  \btheta^{(\bm)}&\triangleq &(\theta_0^{(m_0)},\theta_1^{(m_1)},\cdots,\theta_p^{(m_p)})^T,\\\nonumber
  \btheta^{(\bm)}_{(-j)}(x)&\triangleq &(\theta_0^{(m_0)},\theta_1^{(m_1)},\cdots,\theta_{j-1}^{(m_{j-1})},x,
\bTheta_{j+1}^{(m_{j+1})},\theta_p^{(m_p)})^T,\\\nonumber Q(x, j,
\btheta^{(\bm)}) &\triangleq&
a(\btheta^{(\bm)}_{(-j)}(x),\btheta^{(\bm)}),\\\nonumber \phi(x, j,
\btheta^{(\bm)})&\triangleq&\sum_{1\leq k\leq p}
p_{\lambda}(|\theta_k^{(m_k)}|)+p'_{\lambda}(|\theta_j^{(m_j)}|)
(|x|-|\theta_j^{(m_j)}|),\\\nonumber R(x, j,
\btheta^{(\bm)})&\triangleq&Q(x, j, \btheta^{(\bm)}) +\phi(x, j,
\btheta^{(\bm)}),
\end{eqnarray}
 where $m_0,m_1,\cdots,m_p$ are the
iteration times of $\theta_0,\theta_1,\cdots,\theta_p$ respectively
and $\bm=(m_0,m_1,\cdots,m_p)$. From the concavity of the penalty on
the positive part, we have
$$M_{(2)}(\btheta^{(\bm+\be_{j})})\leq\phi(\theta_j^{(m_j+1)},j, \btheta^{(\bm)}),$$
where $\be_j$ is a $(p+1)$-vector with the $j$-th element 1
and all the others 0; from conditions of the theorem, we see
$$M_{(1)}(\btheta^{(\bm+\be_{j})})\leq
Q(\theta_j^{(m_j)},j, \btheta^{(\bm)}),$$ and
$$M(\btheta^{(\bm+\be_{j})})\leq
R(\theta_j^{(m_j+1)},j, \btheta^{(\bm)}).$$
In the algorithm,
$$\theta_j^{(m_j+1)}=\mathrm{arg}\min_{\theta}R(\theta,j, \btheta^{(\bm)}).$$
Then
\begin{eqnarray}
 \nonumber
  M(\btheta^{(\bm)})&=&R(\theta_j^{(m_j)},j, \btheta^{(\bm)})\\\nonumber
   &>& R(\theta_j^{(m_j+1)},j, \btheta^{(\bm)})\\\nonumber &\geq& M(\btheta^{(\bm+\be_{j})}).
\end{eqnarray}
Since $M$ decreases as iteration continues and has an
lower bound, it converges.

By monotonicity of $M(\btheta^{(t)})$ ($t$ represents the number of
iterations), all points $\btheta^{(\bm)}$ are in a compact set
$$\{\btheta\in \mathcal{C}|M(\btheta)\leq M(\btheta^{(\boldsymbol
0_{p+1})})\}.$$ It is compact because $M(\btheta)$
is continuous and coercive. Then there exists a convergent
subsequence $\btheta^{(t_l)}$, $\btheta^*\in\mathcal{C},$
$j_0\in\{0,1,\cdots,p\}$ such that
$$\lim_{l\rightarrow\infty}\btheta^{(t_l)}=\btheta^*;j^{(t_l+1)}\equiv
j_0.$$ Next let $m^*_l\triangleq m_{j_0}^{(t_{l}+1)}.$
 For any $v\in \mathcal{C}$, we have:
$$M(\btheta^{(t_{l+1})})\leq M(\btheta^{(t_l+1)})\leq R(\theta_{j_0}^{(m^*_l)},j_0|\btheta^{(t_l)})
\leq R(v_{j_0},j_0|\btheta^{(t_l)}).$$ Assume
$M(\btheta^{t_{l+1}})\rightarrow
M(\btheta^{*})=R(\btheta^{*},j|\btheta^{*}),
j=0,\cdots,p$. Taking limit $l\rightarrow\infty$ on both
sides of the equation above, we have
$$M(\btheta^*)\leq
\lim_{l\rightarrow\infty}R(v_{j_0},j_0|\btheta^{(t_l)})=R(v_{j_0},j_0|\btheta^*).$$
Thus the subgradient of $R(\cdot,j_0|\btheta^{*})$  at
$\btheta^{*}$ is 0, which is exactly the derivative of
$M(\btheta)$ with respect to $\bTheta_{j_0}$ at $\btheta^{*},$ because it can
be easily verified that the smooth approximation keeps the first-order
derivative the same.

From the algorithm and the definition of the ``viol'' function,
$$\forall j\in\{0,\cdots,p\},|\dot R(\cdot,j|\btheta^{(t_l)})|\leq |\dot R(\cdot,j_0|\btheta^{(t_l)})|.$$
Taking the derivative of both sides of the equation above, the
subgradient of $R(\cdot,j|\btheta^{*})$ at $\btheta^{*}$ is
0, which is exactly the same as the partial derivative of $M(\btheta)$ with respect to
$\bTheta_{j}$ at $\btheta^{*},\forall j\in\{0,\cdots,p\}.$ From
strict convexity, $\btheta^*$ is the unique local minimum of
$M(\btheta)$.

Method 1 for logistic regression uses Lemma 1 in \cite{IRLS06}. That
is,
$$\log(1+e^z)\leq\frac{z}{2}+\log(e^{-z_0/2}+e^{z_0/2})+\frac{1}{2}\frac{\tanh(.5z_0)}{z_0}(z^2-z_0^2).$$

\end{proof}


\begin{thebibliography}{9}

\bibitem[{Akaike(1973)}]{aic} \textsc{Akaike, H.} (1973). Information theory
and an extension of the maximum likelihood principle. {\it In:}
B.N.PETROV and F.CSAKI, eds. {\it Second International Symposium on
Information Theory.} Budapest: Akademiai Kiado, pp. 267-281.

\bibitem[{Barron et al.(1999)}]{Barronbirge1999} \textsc{Barron, A.},
\textsc{Birge, L.} and \textsc{Massart, P.} (1999). Risk bounds
for model selection via penalization. {\it Probability Theory and
Related Fields}, {\bf 113} 301-413.

\bibitem[{Breheny(2010)}]{ncvregpackage} \textsc{Breheny, P.} (2010). Regularization paths for SCAD- and MCP-penalized
regression models. {\it R package version 2.1}

\bibitem[{Breheny and Huang(2011)}]{coordinate11} \textsc{Breheny, P.} and
\textsc{Huang, J.} (2011). Coordiante descent algorithms for
nonconvex penalized regression, with applications to biological
feature selection. {\it Ann. Appl. Statist.}, {\bf 5} 232-253.

\bibitem[{Breiman(1995)}]{garrote95} \textsc{Breiman, L.} (1995). Better subset regression using the
nonnegative garrote. {\it Technometrics}, {\bf 37} 373-384.

\bibitem[{Breiman(1996)}]{heuristic stability} \textsc{Breiman, L.}
(1996). Heuristics of instability and stabilization in model
selection. {\it Ann. Statist.}, {\bf 24} 2350-2383.

\bibitem[{Breiman(2001)}]{Breiman.2001}
\textsc{Breiman, L.} (2001).
\newblock Random forests.
\newblock \textit{Mach. Learn.}, \textbf{45} 5--32.


\bibitem[{B{\"u}hlmann and Yu(2002)}]{Buhlmann.Yu.2002}
\textsc{B{\"u}hlmann, P.} and \textsc{Yu, B.} (2002).
\newblock Analyzing bagging.
\newblock \textit{Ann. Statist.}, \textbf{30} 927--961.


\bibitem[{Candes and Tao(2007)}]{Candes.Tao.2007}
\textsc{Candes, E.} and \textsc{Tao, T.} (2007).
\newblock The dantzig selector: statistical estimation when p is much larger
  than n (with discussion).
\newblock \textit{Ann. Statist.}, \textbf{35} 2313--2404.

\bibitem[{Chen and Chen(2008)}]{ebic} \textsc{Chen, J.}, and \textsc{Chen Z.}
(2008).
Extended Bayesian information criterion for model selection with
large model space. {\it Biometrica}, {\bf 94} 759-771.

\bibitem[{Efron et al.(2004)}]{leastangle} \textsc{Efron, B.}, \textsc{Hastie, T.},
\textsc{Johnstone, I.}, and \textsc{Tibshirani, R.} (2004). Least
angle regression. {\it Ann. Statist.}, {\bf 32} 407-499.

\bibitem[{Fan and Fan(2008)}]{realdata} \textsc{Fan, J.} and \textsc{Fan, Y.}
(2008).
High-dimensional classification using features annealed
independence rules. {\it Ann. Statist.}, {\bf 36} 2605-2637.

\bibitem[{Fan and Li(2001)}]{Fan01} \textsc{Fan, J.} and \textsc{Li, R.} (2001). Variable selection via nonconcave penalized likelihood and
its oracle properties. {\it J. Amer. Statist. Assoc.}, {\bf 96}
1348-1360.

\bibitem[{Fan and Lv(2010)}]{Fan.Lv.2010}
\textsc{Fan, J.} and \textsc{Lv, J.} (2010).
\newblock A selective overview of variable selection in high dimensional
  feature space.
\newblock \textit{Statistica Sinica}, \textbf{20} 101--148.

\bibitem[{Fan and Lv(2011)}]{FanLv11} \textsc{Fan, J.} and \textsc{Lv, J.}
(2011).
Non-concave penalized likelihood with NP-dimensionality. {\it IEEE
Transactions on Information Theory}, to appear.

\bibitem[{Fan and Peng(2004)}]{Fan04} \textsc{Fan, J.} and \textsc{Peng H.} (2004). Nonconcave penalized likelihood with a diverging number of
parameters. {\it Ann. Statist.}, {\bf 32} 928-961.

\bibitem[{Foster and George(1994)}]{ric} \textsc{Foster, D.} and \textsc{George, E.}
(1994). The risk inflation criterion for multiple regression. {\it
Ann. Statist.,} {\bf 22} 1947-1975.

\bibitem[{Frank and Friedman(1993)}]{bridge} \textsc{Frank, I.E.} and
\textsc{Friedman, J.H.} (1993) A statistical view of some
chemometrics regression tools. {\it Technometrics}, {\bf 35}
109-148.

\bibitem[{Friedman et al.(2007)}]{friedman07}\textsc{ Friedman, J.},
\textsc{Hastie, T., Hoefling, H.} and \textsc{Tibshirani, R.}
(2007) Pathwise coordinate optimization. {\it Annals of Applied
Statistics}, {\bf 1} 302-332.

\bibitem[{Friedman et al.(2010)}]{glmnetpackage} \textsc{Friedman, J.},
\textsc{Hastie, T.} and \textsc{Tibshirani, R.} (2010).
Regularization paths for generalized linear models via coordinate
descent. {\it Journal of Statistical Software}, {\bf 33} 1-22.

\bibitem[{Geyer(1996)}]{Geyer} \textsc{Geyer, C.J. (1996).} On the asymptotics
of convex stochastic optimization. {\it Unpublished manuscript}

\bibitem[{Gordon et al.(2002)}]{Gordon.Jensen.ea.2002}
  \textsc{Gordon, G. J.}, \textsc{Jensen, R. V.}, \textsc{Hsiao, L.-L.}, \textsc{Gullans,
    S. R.}, \textsc{Blumenstock, Joshua E.}, \textsc{Ramaswamy, Sridhar}, \textsc{Richards,
    William G.}, \textsc{Sugarbaker, D. J.} and \textsc{Bueno, R.}
    (2002).
 Translation of Microarray Data into Clinically Relevant Cancer Diagnostic
    Tests Using Gene Expression Ratios in Lung Cancer and Mesothelioma. {\it Cancer Research},
 {\bf 62} 4963-4967.

\bibitem[{Griffin and Brown(2010)}]{Griffin.Brown.2010}
\textsc{Griffin, J.~E.} and \textsc{Brown, P.~J.} (2010).
\newblock Bayesian adaptive lassos with non-convex penalization.
\newblock \textit{Australian and New Zealand Journal of Statistics}.
\newblock To appear.

\bibitem[{Hara and Sillanp(2009)}]{Hara.Sillanp.2009}
\textsc{Hara, R. B.~O.} and \textsc{Sillanp, M.~J.} (2009).
\newblock A review of bayesian variable selection methods: What, how and which.
\newblock \textit{Bayesian Analysis}, \textbf{4} 85--118.


\bibitem[{Hastie et~al.(2009)Hastie, Tibshirani and
  Friedman}]{Hastie.Tibshirani.ea.2009}
\textsc{Hastie, T.}, \textsc{Tibshirani, R.} and \textsc{Friedman, J.~H.}
  (2009).
\newblock \textit{The Elements of Statistical Learning: Data Mining, Inference,
  and Prediction (2nd edition)}.
\newblock Springer-Verlag Inc.

\bibitem[{Hurvich and Tsai(1989)}]{Hurvich.Tsai.1989}
\textsc{Hurvich, C.~M.} and \textsc{Tsai, C.-L.} (1989).
\newblock Regression and time series model selection in small samples.
\newblock \textit{Biometrika}, \textbf{76} 297--307.

\bibitem[{Keerthi and Shevade(2007)}]{fasttracking} \textsc{Keerthi, S.S.}, and \textsc{Shevade, S.K.} (2007) A Fast Tracking Algorithm for
Generalized LARS/LASSO. {\it IEEE Transactions on neutral
networks}, {\bf 18} 1826-1830.

\bibitem[{Knight and Fu(2000)}]{Knight2000} \textsc{Knight, K.}
and \textsc{Fu, W.} (2000). Asymptotics for LASSO-Type Estimators,
{\it Ann. Statist.}, {\bf 28} 1356-1378.

\bibitem[{Lv and Fan(2009)}]{Lv.Fan.2009}
\textsc{Lv, J.} and \textsc{Fan, Y.} (2009).
\newblock A unified approach to model selection and sparse recovery using
  regularized least squares.
\newblock \textit{Ann. Statist.}, \textbf{37} 3498--3528.


\bibitem[{Mallows(1973)}]{CP} \textsc{Mallows, C.L.} (1973). Some comments on
$C_p,$ {\it Technometrics}, {\bf 15}, 661-675.

\bibitem[{Meinshausen and B\"{u}hlmann(2010)}]{Meinshausen.Buhlmann.2010}
\textsc{Meinshausen, N.} and \textsc{B\"{u}hlmann, P.} (2010).
\newblock Stability selection (with discussion).
\newblock \textit{J. Roy. Statist. Soc. Ser. B},
  \textbf{72} 417--473.

\bibitem[{Munk et al.(2006)}]{IRLS06} \textsc{Munk, A., Bissantz, N.},
\textsc{DÂ¡Â§umbgen, L.} and \textsc{Stratmann, B.} (2006).
Convergence analysis of generalized iteratively reweighted least
squares algorithms on convex function spaces. {\it SIAM Journal on
Optimization}, {\bf 19}, 1828-1845.

\bibitem[{Nelder and Wedderburn(1972)}]{Nelder.Wedderburn.1972}
\textsc{Nelder, J.~A.} and \textsc{Wedderburn, R. W.~M.} (1972).
\newblock Generalized linear models.
\newblock \textit{J. Roy. Statist. Soc. Ser. A}, \textbf{135}
  370--384.

\bibitem[{Park and Casella(2008)}]{Park.Casella.2008}
\textsc{Park, T.} and \textsc{Casella, G.} (2008).
\newblock The Bayesian lasso.
\newblock \textit{J. Amer. Statist. Assoc.},
  \textbf{103} 681--686.

\bibitem[{Prentice and Pyke(1979)}]{Prentice79} \textsc{Prentice, R.L.}, and \textsc{Pyke, R.} (1979) Logistic disease incidence models and case-control studies.
{\it Biometrica}, {\bf 66}, 403-411.

\bibitem[{Schwarz(1978)}]{bic} \textsc{Schwarz, G. E.} (1978). Estimating
the Dimension of a Model. {\it Ann. Statist.}, {\bf 6}, 461-464.

\bibitem[{Shevade and Keerthi (2003)}]{viol} \textsc{Shevade, S.K.} and \textsc{Keerthi, S.S.} (2003). A simple and efficient algorithm for gene
selection using sparse logistic regression. {\it Bioinformatics}, {\bf 19},
2246-2253.

\bibitem[{Singh et. a. (2002)}]{singh02}
\textsc{Singh, D.}, \textsc{Febbo, P.}, \textsc{Ross, K.},
\textsc{Jackson, D.}, \textsc{Manola, J.}, \textsc{Ladd, C.},
\textsc{Tamayo, P.}, \textsc{Renshaw, A.}, \textsc{D’amico, A.}
and \textsc{Richie, J.} (2002). Gene expression correlates of
clinical prostate cancer behavior. \textit{Cancer Cell}
\textbf{1}, 203-209.

\bibitem[{Tibshirani(1996)}]{Tibshirani96} \textsc{Tibshrani, R.} (1996) Regression Shrinkage and Selection via the Lasso.
{\it J. Roy. Statist. Soc. Ser. B}, {\bf 58} 267-288.

\bibitem[{Van.Der.Vaart (1998)}]{Asymstat} \textsc{Van Der Vaart, A.W.} (1998)  {\it Asymptotic
Statistics}, Cambridge, U.K.

\bibitem[{Wainwright(2006)}]{Wainwright06} \textsc{Wainwright, M.J.} (2006)  Sharp thresholds for high-dimensional and noisy recovery of
sparsity. {\it IEEE Transactions on Information Theory - TIT},
{\bf 55}, 2183-2202.

\bibitem[{Yu and Feng(2013)}]{YuFeng2012} \textsc{Yu, Y.} and \textsc{Feng, Y.} (2013) APPLE: approximate path for penalized likelihood estimators, {\it Statistics and Computing}, to appear.

\bibitem[{Zhang(2007)}]{Zhang07} \textsc{Zhang, C.-H.(2007).} Continuous generalized gradient descent. {\it Journal of Computational and Graphical Statistics}, {\bf 16} 761-781.

\bibitem[{Zhang(2010)}]{MCP} \textsc{Zhang, C.-H.(2010).} Nearly unbiased
variable selection under minimax concave penalty. {\it Annals of
Statistics}, {\bf 38}, 894-942.

\bibitem[{Zhao and Yu(2006)}]{ZhaoYu06} \textsc{Zhao, P.} and \textsc{Yu, B. (2006).} On model selection consistency of lasso. {\it Journal of Machine Learning Research}, {\bf 7} 2541-2563.

\bibitem[{Zou(2006)}]{Huizou06} \textsc{Zou, H.} (2006). The adaptive lasso and its oracle properties.
{\it J. Amer. Statist. Assoc.}, {\bf 101}, 1418-1429.

\bibitem[{Zou and Hastie(2005)}]{elasticnet} \textsc{Zou, H.}, and \textsc{Hastie, T.} (2005). Regularization and
variable selection via the elastic net. {\it J. Roy. Statist. Soc.
Ser. B}, {\bf 67} 301-320.

\bibitem[{Zou and Li(2008)}]{onestep} \textsc{Zou, H.}, and \textsc{Li, R.} (2008). One step sparse estimates in non-concave penalized
likelihood models. {\it Ann. Statist.}, {\bf 36} 1509-1533.

\end{thebibliography}
\end{document}